\documentclass[11pt]{article}
\usepackage{amsmath,amsthm,amsfonts,amssymb,indentfirst,graphicx,a4,bm,bbm}
\usepackage[latin1]{inputenc}
\usepackage[colorlinks=true,linkcolor=blue,citecolor=red]{hyperref}
\usepackage[svgnames,dvipsnames]{xcolor}
\usepackage{tikz}
\topmargin 0cm
\textheight 22.5cm
\textwidth 16cm
\oddsidemargin 0.5cm

\tikzset{%
  from end of path/.style={
    insert path={
      \pgfextra{%
        \expandafter\pgfprocesspathextractpoints%
          \csname tikz@intersect@path@name@#1\endcsname%
        \pgfpointlastonpath%
        \pgfgetlastxy\lastx\lasty
      }
      (\lastx,\lasty)
}}}

\newtheorem{theo}{Theorem}[section]
\newtheorem{lem}[theo]{Lemma}
\newtheorem{prop}[theo]{Proposition}

\newtheorem{coro}[theo]{Corollary}
\newtheorem{rema}[theo]{Remark}

\newcommand{\M}{\mathcal{M}}

\newcommand{\DD}{\mathcal{D}}
\newcommand{\vr}{\bm{r}}
\newcommand{\K}{K_{\Lambda}}

\newcommand{\rx}{\varphi_{\Lambda}\{a_m\}_{m=1}^{\infty}}
\newcommand{\vq}{\varphi_{\Lambda}\{a^q_m\}}
\newcommand{\vl}{\varphi_{\Lambda}\{a^l_m\}}

\newcommand{\vv}{\bm{v}}

\newcommand{\vf}{\bm{\varphi}}
\newcommand{\ro}{\rho_{\Lambda}}
\newcommand{\Ro}{\bm{\rho_{\Lambda}}}
\newcommand{\si}{\underset{z \to z_c}{\sim}}

\let\a=\alpha \let\b=\beta   \let\e=\varepsilon
       \let\l=\lambda
 \let\n=\nu     \let\p=\pi \let\ph=\varphi
\let\r=\rho   
 \let\x=\xi 
\let\D=\Delta \let\F=\Phi  \let\L=\Lambda 
    \let\X=\Xi


\title{On Kirkwood-Salsburg solutions at criticality  }

\author{Rogério Gomes Alves }
\date{}

\begin{document}

\maketitle

\begin{abstract}

 In this work we study the Kirkwood-Salsburg equations of equilibrium classical continuous systems.  We prove a Laurent expansion for the resolvent, at an eigenvalue of largest modulus of the Kirkwood-Salsburg  operator,   which  is shown to have a pole of order 1. Then we prove that all  correlation functions have an  asymptotic limit as the activity parameter tends to a  smallest zero of the partition function.  As  corollary, we show that  any smallest zero  of the partition function  is simple. The main consequence    is that in case of positive or hardcore potentials we  find the spectral radius of the Kirkwood-Salsburg operator and the convergence radius of the solutions.  
\end{abstract}

\section{Introduction}

  The Kirkwood-Salsburg equations, or KS equations for short, is an infinite system of linear inhomogeneous integral equations  satisfied by the correlation functions of classical continuous systems in statistical  mechanics, see the books Hill \cite{H}, Ruelle \cite{Ru} and   Gallavotti \cite{Ga}. These systems of equations have been developed and studied over the years in statistical mechanics almost always to  study phase transitions. In fact, these equations seems to have its origins in the work of Boltzmann about hard sphere fluid, and its modern form is due to Mayer, Kirkwood and their collaborators, see \cite{H}. The KS equations appears in others contexts in statistical mechanics. For instance, in Brascamp \cite{B}  is shown that the KS equations for a classical lattice gas are equivalent to the Dobrushin-Lanford-Ruelle equations. Applications of the KS equations on convergence problems of polymer models can be find in Bissacot  et al \cite{BFP}  (and references therein)  and on point process in Kuna et al \cite{KLS}.     In this work we explore the spectral properties of the parameter activity $z \in \mathbb{C}$ of the Kirkwood-Salsburg equation to study the  behavior of the correlation functions near a leading singularity.
  
    The KS operator has non trivial spectral properties which deeply depends on the operator domain as well as the interacting potential. We may see this in Ruelle \cite{Ru1} and  \cite{Ru},  Pastur \cite{Pas}, Gorzela\'{n}czyk \cite{Go} and Zagrebnov \cite{Z} and references therein,.
 
 In \cite{Ru}, the Kirkwood-Salsburg equations were introduced as an equation in a  Banach space in which  is defined the Kirkwood-Salzburg operator. This opened a new way to study classical correlation functions with relevant results.     The author in \cite{Pas}  explore the fact that the correlation functions in finite volume are given by the ratio of two entire functions, with the  partition function in the denominator,  to built a Fredholm Theory for the  Kirkwood-Salsburg system in an wider Banach space. The main result in \cite{Pas} is that the spectrum of the KS operator coincides with the set $z_l^{-1},\, l=1,2,\dots$, which $z_l$ are zeros of the partition function. In particular, we have only the point spectrum for the KS operator. This same result is find in  \cite{Go} in a smaller Banach space. In \cite{Go} the author also shows a bound on the spectrum of the KS operator in case of  Hard Core potentials.  Then for such systems it is proved in \cite{Go} that all correlation functions are analytic for nonnegative activity $z$ such that $ z \le [(e-1)C]^{-1}$, that is, phase transition does not occur for these activities values. The positive constant $C$ is defined in the next section in (\ref{constante C(beta)}). From this  the system is still in the gas phase if the density $\r_1$ is such that $\r_1<[eC]^{-1}$.  This result expanded the previous known region   $\r_1 < [(e+1)C]^{-1}$, which was proved  in \cite{Ru}  for nonnegative potentials.  In \cite{Z},   the author was concern about the spectral properties of the KS operator (defined with nonempty boundary conditions, but the results there follows as well to empty boundary conditions) considering different potentials and domains. He  proves, among other things, that the KS operator  have  only the point spectrum, in which is equal to the inverse of zero's partition function, if we take the domain of the KS operator as the  Banach subspace of symmetric sequences and  a interacting potential with repulsive part as  a hard core or positive at least in a neighborhood of the origin.

In this work we assume that the interacting potential and the domain of the KS operator,  are such that it has the property to have only the point spectrum in which is equal to the set of inverse zeros of the partition function.  This a important property to require since, besides we find it  in literature on relevant situations (e. g. positive and hard core potentials) as cited above, we can explore the functional analysis.   We show, for all volume $\L$, that  there exists a Laurent expansion of the resolvent operator, for the  KS operator, at an eigenvalue $\l_c=z_c^{-1}$ such that its modulus is equal to the spectral radius of the KS operator. Then we prove that $\l_c$ is a pole of order $1$ and that all $n-$point $\L$ correlation functions are asymptotically to $\displaystyle \frac{M_n z}{1-z/z_c}$ as  $z \to z_c$, for some nonzero constants $M_n=M_n(\L)$.   From these and the result that inverse of spectral values are zeros of the partition function, mentioned in the preceding paragraph, we have that the smallest zero of the partition function (the closest to the origin)  $z_c$ is simple. The main consequence of our tools is the prove, for stable and regular potentials, that the convergence radius of the Cluster expansion is at least $C^{-1}$, with equality if there is a zero of the partition function with modulus equal to $C^{-1}$.   For positive or hard core potentials, we then have that  the convergence radius of the cluster expansion is equal to $C^{-1}$, with singularity at $z_c=-C^{-1}$.  This solves an old problem in classical statistical mechanics of  equilibrium of determining the convergence radius of the  Cluster expansion of the gas density, as may be seen in the classical works of Groeneveld \cite{Gr}, Penrose \cite{P1}, Ruelle \cite{Ru1} and \cite{Ru}, as well as in Gorzela\'{n}czyk \cite{Go} and \cite{Go 0}.  We also prove that the convergence radius of the Virial expansion for positive or hard cores potentials is at least $(2C)^{-1}$, which improves the results in Ruelle \cite{Ru} and Gorzela\'{n}czyk \cite{Go}.
 
   Our method to study  correlation functions at criticality are motived from combinatorial problems in which complex analysis play an important role as in  analytic combinatorics, see book Flajolet and Sedgewick \cite{FO}. In order to find the singularity of the correlation functions, we start looking for an equation involving them (KS equations), following some standard techniques to study singularities of a system of generating functions. The problem is that in case of an infinite system of correlation functions a lot of  other questions arise and this approach is not direct.  We hope that these   ideas can be applied in  other  problems. 
   
   The next Subsection \ref{classical systems} contains the definitions. The results and their proofs  are in the Section \ref{main results}. We also include an Appendix, Section \ref{apendice}, and a short discussion of the results in the last Section \ref{dicussao}.
   

\subsection{Classical continuous systems}\label{classical systems}

 Consider a classical continuous system of identical point particles in a bounded volume $\L \subset \mathbb{R}^{\n}$. The  \textit{$n$-point correlation functions}  in the volume $\L$ with activity $z \in \mathbb{C}$, in the grand canonical ensemble formalism,  are defined  by
 
 \begin{equation}\label{correlation func}
\ro(x)_n= \ro(z;(x)_n)=\X_{\L}^{-1}(z)\chi_\L(x)_n \sum_{m=0}^{\infty} \frac{z^{n+m}}{m!} \int_{\L^m} d(y)_m  e^{-\b U((x)_n,(y)_m)},
  \end{equation}
where $(x)_n=(x_1,\dots,x_n) \in \mathbb{R}^{\n n}$,   $\int_{\L^m} d(y)_m=\int_{\L} dy_1 \dots \int_{\L} dy_m $,
\begin{equation*}
\chi_\L(x)_n=\prod_{j=1}^{n}\chi_\L(x_j), \,\,\,\,\, \chi_\L(x)=  \begin{cases} 1, &  x \in \L \\ 0, & x \notin \L  \end{cases}  
\end{equation*} 
 and   $\X_{\L}(z)$ is the grand-canonical partition function given by
 
 \begin{equation}\label{part func}
 \X_{\L}(z)=1 + \sum_{m=1}^{\infty}{z^m\over m!} \int_{\L^m} d(y)_m  e^{-\b U(y)_m}.
  \end{equation} 
 
 The pairwise  interaction potential $\F$ is a function depend on the distance between particles and the energy of $n$ particles is the function $(x)_n \in \mathbb{R}^{\n n} \mapsto U(x)_n=\sum_{i<j} \Phi(|x_i-x_j|) \in \mathbb{R}$   which is   symmetric respect with all permutations of their arguments and is  \textit{stable}, that is,   for all $n \ge 1 $ and $x_1,\dots,x_n \in \L$ there is a constant $B \ge 0$ such that 
 \begin{equation}
 U(x)_n \ge -Bn.
 \end{equation}
  We also assume the \textit{regularity} of the potential,   
 
 \begin{equation}\label{constante C(beta)}
 C:=C(\b)=\int_{\mathbb{R}^{\n}} |e^{-\b\Phi(x)} -1| dx < \infty.
 \end{equation}

The correlation functions satisfy an infinite system of integral \textit{Kirkwood-Salsburg equations}  \cite{Ru}:

\begin{equation}\label{KS 0}
\ro(z;x_1)=z \chi_\L(x)_1 \big(1+ (K_\L \ro)(x_1) \big) ,
\end{equation}

\begin{equation}\label{KS 01}
\ro(z;(x)_n)= z\chi_\L(x)_n(K_\L \ro )(x)_n,
	\end{equation}
where $K_\L$ is the \textit{Kirkwood-Salsburg operator}:

\begin{equation}
(K_\L \ro )(x_1)=\sum_{m=0}^{\infty} {1 \over m!} \int_{\L^m} d(y)_m  K(x_1;(y)_m) \ro (y)_m, 
\end{equation}	
and 
\begin{equation}
(K_\L \ro )(x)_n=e^{-W(x_1;(x)'_n)} \sum_{m=0}^{\infty} {1 \over m!} \int_{\L^m} d(y)_m K(x_1;(y)_m) \ro ((x)'_n,(y)_m)
\end{equation}	
with  $(x)'_n:= (x_2,\dots,x_n)$,  $W(x_1;(x)'_n):=U(x)_n - U(x)'_n$ and $$ K(x_1;(y)_m):= \prod_{k=1}^{m}(e^{-\b \Phi(|x_1-y_k|)}-1).$$

  The Kirkwood-Salsburg equation is considered in \cite{Ru} in  the Banach space $\mathcal{E}_\x(\L)$, with respect to the norm $\parallel \cdot \parallel_{\x}$,
  
  \begin{equation}\label{banach space}
  \mathcal{E}_\x(\L) =\{\vf=(\ph(x)_n)_{n \ge 1}:\,\,\,  \parallel \vf \parallel_{\x} = \sup_{n \ge 1} \x^{-n} \mbox{ ess}\sup_{(x)_n \in \mathbb{R}^\n} |\ph(x)_n| < \infty\}.
  \end{equation} 
 We may see that the correlation functions belongs to $\mathcal{E}_\x(\L)$ if $|z|< \x e^{-\b B}$.

 Note that $\mathcal{E}_{\x_2}(\L) \subset \mathcal{E}_{\x_1}(\L)$, if $\x_1>\x_2$. The KS operator may be defined as $K_\L: \mathcal{E}_\x(\L) \to \mathcal{E}_{e^{2\b B}\x(\L)}$, see \cite{Ru}.

  In order to get spectral results we will consider the KS operator defined on a Banach subspace $\DD_\x(\L) = \mbox{ domain}(\K) \subseteq \mathcal{E}_\x(\L)$ such that  $\K:\DD_\x(\L)  \to \DD_\x(\L)$ has the \textit{property} that its spectrum consists of only the point spectrum in which is equal to the set of inverse  zeros of the partition function. This is not a strong assumption, as we can see below, since we find it in relevant situations.

    There are, as we know, two special closed subspace $\DD_\x(\L)$ with the desired property above:
  
  (i) $\DD_\x(\L)=\mathcal{E}^s_\x(\L)$, in which $\mathcal{E}^s_\x(\L)$ is the subspace of symmetric sequences.
  
  (ii) $\DD_\x(\L)=D_\x(\L)$, in which 
  
   \begin{equation*}
  D_\x(\L)=\{\vf  =\big(\varphi_{\Lambda}\{a_m\}(x)_n\big)_{n \ge 1} \in \mathcal{E}_\x(\L):\,\,  \varphi_{\Lambda}\{a_m\}(x)_n =\sum_{m=0}^{\infty} \frac{a_{n+m}}{m!} \int_{\L^m} d(y)_m  e^{-\b U((x)_n,(y)_m)}< \infty 
  \end{equation*}
  \begin{equation}
  \forall n \ge 1 , \forall (x)_n \in \mathbb{R}^{n\n}, \mbox{ and a complex sequence } \{a_m\}_{m=1}^{\infty} \}.
  \end{equation}
  
   The subspace in (i) is the natural to carry symmetries, \cite{Ru1},  \cite{Z} and the subspace in (ii) appears in \cite{Go}. We give a proof in the Appendix, Section \ref{apendice}, that the subspace in (ii) is a Banach space. This subspace seems to be a natural choice, since besides in \cite{Go}, the Banach subspace in \cite{Pas}, see Lemma 2 therein, may be thought as some $D_\x$.

In the following,  we state  the  Theorems (\ref{zagrebnov}) and (\ref{Pastur}) below, in which we have the property for the KS operator to have only the point spectrum equal to the set of inverse zeros of the partition function.
  
  \begin{theo}\label{zagrebnov}
  	(Zagrebnov \cite{Z}, Theorem 3.5) If $\DD_\x(\L) = \mathcal{E}^s_\x(\L)$ and the  pair potential is regular and its repulsive part has a hard core or is positive in some neighborhood of the origin, then  the spectrum of $K_\L$ on $\DD_\x(\L)$ consists of only the point spectrum, and a complex number $\l \not= 0$ belongs to the spectrum  if and only if $z=\l^{-1}$ is a zero of the partition function.
  \end{theo}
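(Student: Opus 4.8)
\emph{Proof proposal.} The plan is to reduce the assertion to a meromorphic continuation property of the resolvent of $K_\Lambda$, and then match its poles with the zeros of $\Xi_\Lambda$ via the ratio representation (\ref{correlation func}). First I would read off (\ref{KS 0})--(\ref{KS 01}) that, with $\alpha:=(\chi_\Lambda(x_1),0,0,\dots)$ and $K_\Lambda$ understood to act as in (\ref{KS 01}) (i.e.\ carrying the projection $\chi_\Lambda(x)_n$), the correlation vector satisfies $(I-zK_\Lambda)\rho(z;\cdot)=z\alpha$, so that a nonzero $\lambda$ is a spectral value of $K_\Lambda$ exactly when $I-zK_\Lambda$ fails to be boundedly invertible at $z=\lambda^{-1}$. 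Next I would prove that $K_\Lambda$ is compact on $\mathcal{E}^s_\xi(\Lambda)$: by the regularity bound (\ref{constante C(beta)}) the kernels obey $\frac1{m!}\int_{\Lambda^m}|K(x_1;(y)_m)|\,d(y)_m\le C^m/m!$, and on the region where the repulsive part of $\Phi$ acts one has $|e^{-\beta\Phi}-1|\le 1$; together with stability these make the operator-norm series for $K_\Lambda$ converge, and $K_\Lambda$ is approximated in operator norm by finite-rank maps (truncate the sum over $m$; truncate the level index $n$, which in the hard-core case is automatic since at most $N_\Lambda=N_\Lambda(|\Lambda|,r_0)$ particles fit in $\Lambda$, forcing $\varphi_n\equiv 0$ for $n>N_\Lambda$, whereupon the remaining blocks are integral operators over the finite-measure sets $\Lambda^n$ with bounded kernels, hence compact; when $\Phi$ is only positive near the origin one gets the same conclusion by approximating $\Phi$ by hard-core cut-offs and passing to the operator-norm limit). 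In particular the nonzero spectrum of $K_\Lambda$ is pure point.

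With compactness in hand I would apply the analytic Fredholm theorem to the entire operator-valued map $z\mapsto zK_\Lambda$. Since the Neumann series for $(I-zK_\Lambda)^{-1}$ converges for small $|z|$ (equivalently the low-activity cluster expansion converges and $\rho(z;\cdot)\in\mathcal{E}^s_\xi(\Lambda)$ for $|z|<\xi e^{-\beta B}$), the theorem gives that $(I-zK_\Lambda)^{-1}$ extends meromorphically to $\mathbb{C}$ with pole set $S=\{z\neq 0:\ker(I-zK_\Lambda)\neq\{0\}\}=\{z\neq 0:z^{-1}\in\operatorname{spec}(K_\Lambda)\}$. It then remains to identify $S$ with the zero set of $\Xi_\Lambda$. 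For the inclusion ``zeros of $\Xi_\Lambda\subseteq S$'' I would use that (\ref{correlation func}) factors as $\rho(z;(x)_n)=\Xi_\Lambda(z)^{-1}\,\Xi_\Lambda((x)_n;z)$ with $\Xi_\Lambda((x)_n;z):=\chi_\Lambda(x)_n\sum_{m\ge 0}\frac{z^{n+m}}{m!}\int_{\Lambda^m}d(y)_m\,e^{-\beta U((x)_n,(y)_m)}$, that $\rho(z;\cdot)=z(I-zK_\Lambda)^{-1}\alpha$ for small $|z|$, and that this identity persists by meromorphic continuation; then, from the elementary insertion identity $\int_{\Lambda^k}\Xi_\Lambda((x)_k;z)\,d(x)_k=z^k\,\Xi_\Lambda^{(k)}(z)$, a zero $z_0\neq 0$ of $\Xi_\Lambda$ of order $k$ yields $\int_{\Lambda^k}\rho(z;(x)_k)\,d(x)_k=z^k\Xi_\Lambda^{(k)}(z)/\Xi_\Lambda(z)$, which has a genuine pole at $z_0$ since $z_0^k\Xi_\Lambda^{(k)}(z_0)\neq 0$; hence $(I-zK_\Lambda)^{-1}\alpha$ is singular at $z_0$ and $z_0\in S$. (In the hard-core case $\Xi_\Lambda$ and each $\Xi_\Lambda((x)_n;z)$ are polynomials in $z$ of degree $\le N_\Lambda$, so $\rho(z;\cdot)$ is globally rational with values in $\mathcal{E}^s_\xi(\Lambda)$ and the continuation is immediate; in the positive-near-origin case one argues on the disc of validity, supplemented by the approximation above.)

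For the reverse inclusion, ``$S\subseteq$ zeros of $\Xi_\Lambda$'', one must show $I-z_0K_\Lambda$ is invertible whenever $\Xi_\Lambda(z_0)\neq 0$, equivalently that the homogeneous equation $\psi=z_0K_\Lambda\psi$ has only the trivial solution in $\mathcal{E}^s_\xi(\Lambda)$ there. I expect this to be the main obstacle. The cleanest route is the Pastur-style identification of the Fredholm determinant $\det(I-zK_\Lambda)$ with $\Xi_\Lambda(z)$ up to a nowhere-vanishing entire factor, which gives $(I-zK_\Lambda)^{-1}=\Xi_\Lambda(z)^{-1}\cdot(\text{entire operator-valued function})$, analytic off the zeros of $\Xi_\Lambda$; alternatively, in the hard-core case one may solve $(I-z_0K_\Lambda)\varphi=\psi$ explicitly in terms of partition functions with insertions and check uniqueness via the level truncation $\varphi_n\equiv 0$, $n>N_\Lambda$, treating the positive-near-origin case afterwards by approximation. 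This last step, together with the compactness argument for potentials that are only positive near the origin, is where the symmetry of the domain and the hard-core/positivity hypotheses are genuinely used; the rest is the standard machinery of analytic Fredholm theory applied to the Kirkwood--Salsburg resolvent.
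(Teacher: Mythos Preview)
The paper does not prove this statement at all: Theorem~\ref{zagrebnov} is quoted from Zagrebnov~\cite{Z} (Theorem~3.5 there) and used as a standing hypothesis, so there is no ``paper's own proof'' to compare against. What the paper does record about Zagrebnov's argument (see Section~\ref{dicussao}) is relevant to your proposal, however: Zagrebnov does \emph{not} show that $K_\Lambda$ is compact, but only that in the hard-core case some power $K_\Lambda^n$ is compact for $n$ large, and more generally that $K_\Lambda$ is merely ``quasipotentially compact''. This matches Remark~\ref{obs 2}: on the coefficient side $K_\Lambda$ acts like a right shift, $\{a_m\}\mapsto\{a_{m-1}\}$, which is an isometry-type map and certainly not compact; for the ideal gas the spectrum is even purely residual.

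Your main gap is therefore the compactness step. The claim that ``the remaining blocks are integral operators over the finite-measure sets $\Lambda^n$ with bounded kernels, hence compact'' is not valid in the $L^\infty$-type norm~(\ref{banach space}): bounded-kernel integral operators on $L^\infty$ are not automatically compact without extra equicontinuity of the kernel, and even granting that, the shift structure between levels prevents $K_\Lambda$ itself from being compact. The correct route (following~\cite{Z}) is to exploit the hard core to force $\varphi_n\equiv 0$ for $n>N_\Lambda$ and then show that a \emph{power} of $K_\Lambda$ is compact, after which Riesz--Schauder theory and the analytic Fredholm alternative apply as you outline. Your approximation of the ``positive near the origin'' case by hard-core cut-offs is also unjustified as written: you would need operator-norm convergence of the approximants, which does not follow from pointwise convergence of the potentials and is where the real work lies; Zagrebnov handles this case by a separate argument rather than by naive approximation. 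The identification of the pole set with the zeros of $\Xi_\Lambda$ via the ratio representation and the insertion identity is fine in spirit and close to Pastur's determinant argument, but it only becomes available once the spectral/Fredholm input above is secured.
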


  \begin{theo}\label{Pastur}
  	( Gorzela\'nczyk \cite{Go}, Proposition 1; Pastur \cite{Pas}, Theorem ) 
  	If the pair  potential is regular and stable, then  the spectrum of $K_\L$ on $\DD_\x(\L)$ consists of only the point spectrum, and a complex number $\l \not= 0$ belongs to the spectrum of $K_\L$ on $\DD_\x(\L)$ if and only if $z=\l^{-1}$ is a zero of the partition function. The multiplicity of the eigenvalue $\l$ is equal to the corresponding zero.  In  \cite{Go} we have $\DD_\x(\L) = D_\x(\L)$ and in \cite{Pas} we have other subspace $\DD_\x(\L)= \DD \subseteq \cup_{ \x>0}  \mathcal{E}^s_\x(\L)$.
  \end{theo}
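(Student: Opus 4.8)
The plan is to combine the explicit finite-volume formulas (\ref{correlation func})--(\ref{part func}), which exhibit each correlation function as a ratio of two entire functions of $z$, with the Fredholm-determinant calculus for nuclear operators; the same formulas also produce the eigenfunctions by hand. Put $I_k:=\int_{\L^k}d(w)_k\,e^{-\b U(w)_k}$ (with $I_0:=1$), so that $\X_\L(z)=\sum_{k\ge 0}z^k I_k/k!$, and let $\bm{N}(z)=(N_n(z;\cdot))_{n\ge 1}$ be the vector of numerators occurring in (\ref{correlation func}),
\[
N_n(z;(x)_n)=\chi_\L(x)_n\sum_{m=0}^{\infty}\frac{z^{n+m}}{m!}\int_{\L^m}d(y)_m\,e^{-\b U((x)_n,(y)_m)},
\]
so that $\ro(z)=\X_\L(z)^{-1}\bm{N}(z)$ at every $z$ with $\X_\L(z)\ne 0$. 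For $|z|$ small, stability shows that $\bm{N}(z)=\varphi_\L\{z^m\}$ is an analytic $\DD_\x(\L)$-valued function of $z$ (it is symmetric and has the form defining $D_\x(\L)$) and that $\bm{\alpha}:=(\chi_\L(x_1),0,0,\dots)\in\DD_\x(\L)$. The KS equations (\ref{KS 0})--(\ref{KS 01}) read $(I-zK_\L)\ro(z)=z\bm{\alpha}$; multiplying by $\X_\L(z)$ and comparing power-series coefficients gives the identity
\[
(I-zK_\L)\,\bm{N}(z)=z\,\X_\L(z)\,\bm{\alpha}\qquad\text{for all }z\in\mathbb{C}
\]
(read, for a fixed $z_0$, in a space $\DD_\x(\L)$ with $\x$ large enough). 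If $z_0\ne 0$ is a zero of $\X_\L$, this yields $(I-z_0K_\L)\bm{N}(z_0)=0$; and $\bm{N}(z_0)\ne 0$, because otherwise the identity
\[
\int_{\L^n}N_n(z;(x)_n)\,d(x)_n=\sum_{k\ge n}\frac{z^k}{(k-n)!}\,I_k=z^n\,\X_\L^{(n)}(z)\qquad(n\ge 1),
\]
which follows from the symmetry of $U$ by relabelling the integration variables, would force $\X_\L^{(n)}(z_0)=0$ for every $n\ge 1$, and this together with the hypothesis $\X_\L(z_0)=0$ would contradict that $\X_\L$ is a nonzero entire function (recall $\X_\L(0)=1$). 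Hence $\lambda_0=z_0^{-1}$ is an eigenvalue of $K_\L$, with eigenvector $\bm{N}(z_0)$.

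For the reverse inclusion, for the fact that the nonzero spectrum is point spectrum, and for the multiplicity statement, I would show that $K_\L$ is a nuclear (trace-class) operator on $\DD_\x(\L)$: the integrations run over the finite-measure sets $\L^m$, the kernels $K(x_1;(y)_m)=\prod_{k=1}^{m}(e^{-\b\Phi(|x_1-y_k|)}-1)$ have total mass controlled by the regularity constant $C$ of (\ref{constante C(beta)}), and the repulsive hypothesis --- a hard core or positivity near the origin in Theorem \ref{zagrebnov}, stability in Theorem \ref{Pastur} --- supplies the decay in $n$ needed to sum a nuclear representation against the weight $\x^{-n}$; this is exactly where the choice of the subspace $\DD_\x(\L)$ enters. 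Granting nuclearity, the analytic Fredholm alternative gives that $z\mapsto(I-zK_\L)^{-1}$ is meromorphic on $\mathbb{C}$, that $\sigma(K_\L)\setminus\{0\}$ is point spectrum, and that $\det(I-zK_\L)$ is an entire function whose zeros are exactly the $z_0$ with $z_0^{-1}\in\sigma(K_\L)$, the order of each zero being the algebraic multiplicity of the corresponding eigenvalue. It then remains to identify $\det(I-zK_\L)=\X_\L(z)$: for $|z|$ small one expands $-\log\det(I-zK_\L)=\sum_{k\ge 1}z^k\operatorname{tr}(K_\L^k)/k$, computes $\operatorname{tr}(K_\L^k)$ as an $\L^{k-1}$-integral of a product of Mayer factors around a $k$-cycle, and recognizes the Mayer (cluster) series for $\log\X_\L(z)$; analyticity then propagates the identity to all $z$. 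Together with the first paragraph this gives $\sigma(K_\L)\setminus\{0\}=\{z_l^{-1}:l=1,2,\dots\}$ and that the multiplicity of $z_l^{-1}$ equals the order of $z_l$ as a zero of $\X_\L$.

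The main obstacle is the nuclearity of $K_\L$ on the prescribed subspace --- one must produce a nuclear representation compatible with the weighted supremum norm $\|\cdot\|_\x$ and with the symmetry constraint --- together with the identification $\det(I-zK_\L)=\X_\L(z)$, whose proof is a term-by-term comparison of the trace expansion with the Mayer expansion that must be justified with an interchange of summation and integration valid for $|z|$ small. An alternative for the reverse inclusion that avoids determinants is to prove directly that $I-z_0K_\L$ is boundedly invertible whenever $\X_\L(z_0)\ne 0$, using $\ro(z)=\X_\L(z)^{-1}\bm{N}(z)$ as a local analytic right inverse applied to $\bm{\alpha}$ and upgrading it to a two-sided inverse via the index-lowering part of $K_\L$ for injectivity; but only the determinant route also delivers the multiplicity count, so that is the one I would pursue.
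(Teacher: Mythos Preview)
The paper does not prove this theorem; it is quoted from Gorzela\'nczyk \cite{Go} and Pastur \cite{Pas} and used as a black box. So there is no ``paper's own proof'' to compare with, only the original literature.

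Your first paragraph is correct and is essentially how one direction is done in the literature: the identity $(I-zK_\L)\bm{N}(z)=z\,\X_\L(z)\bm{\a}$, together with the observation $\int_{\L^n}N_n(z;(x)_n)\,d(x)_n=z^n\X_\L^{(n)}(z)$, indeed shows that each zero $z_0$ of $\X_\L$ produces an eigenvector $\bm{N}(z_0)$ for the eigenvalue $z_0^{-1}$.

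Your second paragraph, however, rests on a claim that fails: $K_\L$ is \emph{not} nuclear, and in general not even compact, on $\DD_\x(\L)$. As recorded in Remark~\ref{obs 2}, in the coefficient representation $K_\L$ acts as the right shift $\{a_m\}_{m\ge1}\mapsto\{a_{m-1}\}_{m\ge1}$ with $a_0$ determined by $\X_\L$; shift-type operators are the prototype of non-compact operators. The paper's own Discussion (Section~\ref{dicussao}) makes this explicit: following Zagrebnov, $K_\L$ is only ``quasipotentially compact'' in the sense that high powers $K_\L^n$ are close to finite-rank, and for hard cores one needs $K_\L^n$ with $n$ large to get compactness. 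So the Grothendieck--Fredholm determinant you invoke is not available, and the proposed identity $\det(I-zK_\L)=\X_\L(z)$ has no a priori meaning. Even heuristically your trace argument is off: $\sum_k z^k\operatorname{tr}(K_\L^k)/k$ would sum only cycle diagrams, whereas $\log\X_\L(z)$ is the full connected Mayer series, so the two cannot coincide term by term.

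What Pastur and Gorzela\'nczyk actually do is closer to your ``alternative'' at the end, but carried out explicitly in the coefficient picture. Using the parametrization of $\DD_\x(\L)$ by sequences $\{a_m\}$ and the shift action (\ref{eq remark}), the resolvent equation $(K_\L-\l)\vf=\psi$ becomes a first-order recursion in the $a_m$'s whose solvability is governed precisely by the constraint defining $a_0$, i.e.\ by $\X_\L(\l^{-1})$. This yields both the reverse inclusion (bounded invertibility when $\X_\L(\l^{-1})\ne0$) and the multiplicity count directly, without any appeal to nuclearity or determinants. If you want to repair your argument, that is the route to take; the nuclearity route cannot be salvaged here.
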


  \begin{rema}\label{obs 1}
  Our results are based on the fact that the KS operator $\K$ has the property of have only the point spectrum equal to the set of inverse zeros of the partition function. So, what we need is a Banach subspace $\DD_\x(\L)$ and a suitable potential interaction to have this. Although we work with this property itself, the achievement of our results are conditioning on the more general setting in which we have such property.
  
  From the theorems above, we  can consider the KS operator $\K$ with our property  in one of the followings situations:

  	(a)  $\DD_\x(\L) = \mathcal{E}^s_\x(\L)$ and the pair potential  is regular and its repulsive part has either a hard core or is positive in some neighborhood of the origin.
  	
  	(b)  $\DD_\x(\L) = D_\x(\L)$ and pair potential  is regular and stable.

  	In front of this, we consider the KS operator $\K:\DD_\x(\L) \to \DD_\x(\L)$, which is bounded if it is defined following  either (a) or (b) above, see Appendix, Proposition \ref{prop 3 apend}.  We will state our theorems under the assumptions in (b) in order to obtain a more general result.
   \end{rema}

 \begin{rema}\label{obs 2}
 	The subspaces  $\DD_\x(\L) = D_\x(\L)$ and $\DD_\x(\L) = \mathcal{E}^s_\x(\L)$ are Banach spaces invariant under the action of $\K$, see \cite{Go}, \cite{Z} or Appendix Proposition \ref{prop 2 apend}. In fact, the Kirkwood-Salsburg operator $\K$ acting on $\ph(x)_n=\rx$ defines an operator (also denoted by $\K$) in the space of coefficients$\{a_m\}_{m=1}^{\infty}$ acting as
 	
 	\begin{equation}\label{eq remark}
 	\K\{a_m\}_{m=1}^{\infty}=\{a_{m-1}\}_{m=1}^{\infty}, \,\,\,\,\,\,\,\,\,\,\,\,\, a_0=-\sum_{m=1}^{\infty} \frac{a_{m}}{m!} \int_{\L^m} d(y)_m  e^{-\b U((y)_m)}.
 	\end{equation}
 	This is already known important fact since from \cite{Pas}, which may be seen  by directed calculations.  Due this action of the KS operator, which is similar to the right shift, we have  for the particular case of the ideal gas, that the spectrum of the KS operator consists of only the residual spectrum. See \cite{Z} for a proof and a more involved discussion. 
 \end{rema}


We want to consider the solvability  of the   Kirkwood-Salsburg (KS) equations on a  $\DD_\x(\L)$, such that the KS operator has only the point spectrum in which coincides with the inverse of zeros of the partition function,

  \begin{equation}\label{KS 2}
(I-z\K)  \vf =z\bm{\a}_\L  ,
  \end{equation}
or 
 \begin{equation}\label{KS 3}
 (\K - \l)\vf =-\bm{\a}_\L,
 \end{equation}
 with $\l=z^{-1}$, $\vf \in \DD_\x(\L)$, and $\bm{\a}_\L=(\a_\L(z;(x)_n))_{n \ge 1}$, where $\a_\L(z;(x)_n)=1$, if $n=1$ and $0$ otherwise.


 We will consider the \textit{resolvent operator} $R(\l):=(\K-\l)^{-1}$ \footnote{We consider the resolvent operator as in \cite{K} instead of $(\l-\K)^{-1}$. } as an operator valued  function of one complex variable.

   By the symbol $f \si g$ we mean that  the functions $f,g:\D \to \mathbb{C}$,  for $\D \subset \mathbb{C} \setminus \{z_c\}$, $z_c \in \bar{\D}$, satisfy $\displaystyle \lim_{z \to z_c} f(z)/ g(z) = 1$.  We also use the same notation for sequences of complex numbers $f_n \sim g_n$ to mean $\displaystyle \lim_{n \to \infty} f_n/ g_n = 1$.


\section{Results: critical behavior}\label{main results}

From now on we assume that the Kirkwood-Salsburg operator are defined, for a non ideal gas, as in (b). See Remarks \ref{obs 1} and \ref{obs 2}.

The goal is to explore the spectral properties of the Kirkwood-Salsburg operator in the problem of solving the equation (\ref{KS 2})-(\ref{KS 3}).    It can be seen along this work that our results comes from the fact that the critical behavior of correlation functions are determined by its components in the direction corresponding to a leading simple eigenvalue in a spectral decomposition of the solution of the Kirkwood-Salsburg equation. We shall see that we are able to determine the  behavior of the solution of this equation near a smallest zero of the partition function.

       The Theorem \ref{teorema principal} below gives the asymptotic behavior of the correlation functions  and its  coefficients at a leading singularity. As a Corollary (\ref{cor principal}), we have that a smallest zero of the partition function is simple.

       \begin{theo}\label{teorema principal}
          Let be  $\l_c=z_c^{-1} \not=0$ in the spectrum of  $\K$.  Then, for all $\L$, in a neighborhood of $\l_c$ holds,   
          
          \begin{equation}\label{resolvent exp}
          R(\l)={-P \over \l-\l_c} + \sum_{n \ge 0}(\l-\l_c)^n S^{n+1}, 
          \end{equation}     
         with
         \begin{equation}\label{projection}
         P:={-1 \over 2\pi i}\int_{\mathcal{C}} R(\l)\, d\l \: \: \:\:\:\:\:\: S:={1 \over 2\pi i}\int_{\mathcal{C}} {R(\l) \over \l -\l_c}\, d\l,
         \end{equation}
         where $\mathcal{C}$ is a circle centered at $\l_c$ which its interior, except $\l_c$, and the own $\mathcal{C}$ is contained in  the resolvent set of $\K$. Then  the resolvent operator $R(\l):=(\K-\l)^{-1}$ has a pole of order 1 at $\l_c$ and it is an eigenvalue of multiplicity  $1$.

          Consider the complex function $f(z):=\vv_c^*(P\Ro(z))$, defined on a domain $\D \subset \mathbb{C} \setminus \{z_c\}$ to be given, where $\vv_c^*$ is the dual element of the eigenvector $\vv_c$ associate with $\l_c$. Then
          
       \begin{equation}\label{asymp}
       f(z) =  {Mz \over 1 - z/z_c},  
       \end{equation}
       for  some nonzero constant $M=\vv_c^*(\a_\L)$. For all $n \ge 1$,
       
       \begin{equation}\label{asymp 1}
       \ro(z,(x)_n) \si  {M_nz \over 1 - z/z_c},  
       \end{equation}
       for  some nonzero constants $M_n=M_n(\L)$, restricting  $\ro(z,(x)_n)$  on  $\D$. Consequently   $|\l_c|=r(\K)$, the spectral radius of $\K$.              	               	   
           \end{theo}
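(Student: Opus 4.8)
The plan is to treat $\l_c$ first as an isolated point of $\sigma(\K)$, producing (\ref{resolvent exp}) from the Riesz--Dunford functional calculus; then to pin down the local Jordan structure of $\K$ at $\l_c$ from the explicit ``shift with feedback'' action (\ref{eq remark}); and finally to transfer this information to the correlation functions through the identity $\Ro(z)=-R(z^{-1})\bm{\a}_\L$ obtained by solving (\ref{KS 2})--(\ref{KS 3}). For the first part: by hypothesis $\sigma(\K)=\{z_l^{-1}\}_{l\ge1}$ and, by stability and regularity, $\X_\L$ is entire, so the zeros $z_l$ have no finite accumulation point; hence $\sigma(\K)\setminus\{0\}$ is discrete and $\l_c\not=0$ is isolated. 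The functional calculus (as in \cite{K}) then gives, on a punctured disc about $\l_c$ inside the resolvent set,
\begin{equation*}
R(\l)=-\frac{P}{\l-\l_c}-\sum_{j\ge1}\frac{D^{j}}{(\l-\l_c)^{j+1}}+\sum_{n\ge0}(\l-\l_c)^nS^{n+1},
\end{equation*}
with $P,S$ as in (\ref{projection}), $P$ the Riesz (eigen)projection, $D:=(\K-\l_c)P$ the eigennilpotent and $S$ the reduced resolvent, and $PS=SP=0$, $PD=DP=D$, $S(\K-\l_c)=(\K-\l_c)S=I-P$. Thus (\ref{resolvent exp}) is exactly the conjunction $D=0$ and $\operatorname{rank}P=1$.

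For the Jordan structure I would work in the coefficient sequence space via (\ref{eq remark}). Solving $\K\{a_m\}=\l_c\{a_m\}$ gives $a_m=a_0 z_c^{m}$, and the feedback relation collapses to $a_0\X_\L(z_c)=0$, which holds; hence the eigenvector is unique up to a scalar and the geometric multiplicity of $\l_c$ is $1$. Solving $(\K-\l_c)\{c_m\}=\{z_c^{m}\}$ gives $c_m=c_0 z_c^{m}-m z_c^{m+1}$, and the feedback relation, together with the identities
\begin{equation*}
\sum_{m\ge1}\frac{z_c^{m}}{m!}\int_{\L^m}d(y)_m\,e^{-\b U(y)_m}=\X_\L(z_c)-1\qquad\text{and}\qquad\sum_{m\ge1}\frac{m\,z_c^{m}}{m!}\int_{\L^m}d(y)_m\,e^{-\b U(y)_m}=z_c\,\X_\L'(z_c),
\end{equation*}
forces $\X_\L'(z_c)=0$; iterating, a generalized eigenvector of rank $j$ exists iff $\X_\L(z_c)=\X_\L'(z_c)=\cdots=\X_\L^{(j-1)}(z_c)=0$. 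Since the geometric multiplicity is $1$, the generalized eigenspace is a single Jordan block, so the pole order of $R$ at $\l_c$ and the algebraic multiplicity of $\l_c$ both equal the order of $z_c$ as a zero of $\X_\L$; in particular (\ref{resolvent exp}) is \emph{equivalent} to the statement that $z_c$ is a simple zero of $\X_\L$.

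Proving that simplicity is, I expect, the \emph{main obstacle} --- it is invisible from the abstract spectral property alone. I would take $z_c$ to be a zero of $\X_\L$ of smallest modulus and $\D$ to be (essentially) the disc $\{|z|<|z_c|\}$ of convergence of the cluster expansion $\Ro(z)=z\sum_{n\ge0}z^n\K^n\bm{\a}_\L$, with $z_c\in\partial\D$, and then conjugate $\K$ by the diagonal sign operator $(V\vf)(x)_n:=(-1)^n\varphi(x)_n$. For a positive or hard-core pair potential $K(x_1;(y)_m)=(-1)^m|K(x_1;(y)_m)|$, so a direct computation gives $V\K V^{-1}=-\widehat{\K}$, where $\widehat{\K}$ is obtained from $\K$ by replacing the kernel $K$ by $|K|$; the operator $\widehat{\K}$ preserves the natural positive cone, is positivity improving, and is compact for fixed $\L$ (consistently with $\sigma(\widehat{\K})$ accumulating only at $0$). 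Since $\sigma(\widehat{\K})=-\sigma(\K)$, Krein--Rutman gives that $r(\widehat{\K})$ is a simple eigenvalue of $\widehat{\K}$ with a strictly positive eigenvector $\bm{\psi}$; choosing $z_c$ to correspond to it, $\l_c=z_c^{-1}$ is a simple eigenvalue of $\K$ with eigenvector $\vv_c:=V^{-1}\bm{\psi}$ whose components vanish nowhere, $|\l_c|=r(\widehat{\K})=r(\K)$, and $D=0$, $\operatorname{rank}P=1$, so (\ref{resolvent exp}) holds. (For a general stable potential this sign structure is unavailable; one would then instead have to show directly that the dominant singularity of the cluster expansion is a simple pole.)

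Finally, solving (\ref{KS 2})--(\ref{KS 3}) at $\l=z^{-1}$ outside $\sigma(\K)$ gives $\Ro(z)=-R(z^{-1})\bm{\a}_\L$; inserting the simple-pole expansion and using $1/(z^{-1}-\l_c)=z/(1-z/z_c)$,
\begin{equation*}
\Ro(z)=\frac{z}{1-z/z_c}\,P\bm{\a}_\L-\sum_{n\ge0}\bigl(z^{-1}-\l_c\bigr)^nS^{n+1}\bm{\a}_\L,
\end{equation*}
the last sum being analytic near $z_c$. Normalizing $\vv_c^*\vv_c=1$, so that $P=\vv_c\vv_c^*$ and $\vv_c^*S=0$ (as $\operatorname{ran}S\subseteq\ker P\subseteq\ker\vv_c^*$), applying $\vv_c^*$ annihilates the analytic part and yields $f(z)=\vv_c^*(P\Ro(z))=\tfrac{Mz}{1-z/z_c}$ on $\D$ with $M=\vv_c^*(\bm{\a}_\L)$; moreover $M\not=0$, for otherwise $P\bm{\a}_\L=0$ and $\Ro$ would be analytic at $z_c$, contradicting that $z_c$ is a (genuine) singularity of its cluster expansion. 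Reading off the $n$-th component, $\ro(z;(x)_n)=\tfrac{Mz}{1-z/z_c}(\vv_c)_n(x)_n+O(1)$ as $z\to z_c$ in $\D$, so $\ro(z;(x)_n)\si\tfrac{M_n z}{1-z/z_c}$ with $M_n=M\,(\vv_c)_n(x)_n\not=0$ by the strict positivity of $\vv_c$; and $|\l_c|=r(\K)$ because $z_c$ is a smallest zero. The remaining technical points are the compactness and positivity-improving property of $\widehat{\K}$ on the chosen Banach space, and the legitimacy of the normalization $P=\vv_c\vv_c^*$, i.e.\ $\vv_c^*\vv_c\not=0$.
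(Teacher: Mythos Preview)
Your derivation of (\ref{asymp})--(\ref{asymp 1}) from the simple-pole expansion via $\Ro(z)=-R(z^{-1})\bm{\a}_\L$ matches the paper's almost line for line, and your justification of $M\neq0$ and $M_n\neq0$ (through strict positivity of the Perron vector) is in fact more explicit than the paper, which simply asserts these constants are nonzero.

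The genuine divergence is in how $D=\bm0$ and $\operatorname{rank}P=1$ are obtained. You reduce this, via the shift action (\ref{eq remark}), to the simplicity of $z_c$ as a zero of $\X_\L$ --- essentially re-proving the multiplicity statement in Theorem~\ref{Pastur} --- and then establish simplicity at the spectral radius by a Krein--Rutman argument on the sign-conjugated operator $\widehat\K$. The paper runs the logic in the \emph{opposite} direction: it claims a purely operator-theoretic proof that $D=\bm0$, valid for \emph{any} isolated $\l_c\in\sigma(\K)$ and any stable regular potential, by expanding $(\K-\l)R(\l)=I$ in powers of $\l-\l_c$ to get $D^n[(\K-\l_c)-D]=\bm0$ and then arguing this forces $D=\bm0$; simplicity of $z_c$ (Corollary~\ref{cor principal}) is only deduced \emph{afterwards}, by combining the theorem with Theorem~\ref{Pastur}. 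Your approach buys transparency and gives $|\l_c|=r(\K)$ and $M_n\neq0$ for free, but, as you acknowledge, it covers only positive or hard-core potentials and not the full setting (b) under which the theorem is stated. Conversely, you may wish to scrutinize the paper's abstract step: since $(\K-\l_c)D^n=(\K-\l_c)PD^n=D\cdot D^n=D^{n+1}$ holds identically for the Riesz eigennilpotent, the relation $D^n[(\K-\l_c)-D]=\bm0$ is a tautology and by itself places no constraint on $D$, so the gap you identified --- that simplicity of the leading zero must be supplied from elsewhere --- may be real in the paper as well.
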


      \begin{coro}\label{cor principal}
     The smallest zero $z_c=\l_c^{-1}$ of the partition function \ref{part func} is simple.   
      \end{coro}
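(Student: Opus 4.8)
The plan is simply to chain together the two multiplicity statements that are already available. First I would fix $z_c$ to be the smallest zero of the partition function $\X_\L$, i.e.\ the one of least modulus (there is one, since $\X_\L$ is entire with $\X_\L(0)=1$, so its zeros are isolated), and put $\l_c=z_c^{-1}$. By the spectral property assumed throughout (Theorem \ref{Pastur}), the fact that $z_c$ is a zero of $\X_\L$ is equivalent to $\l_c$ belonging to the point spectrum of $\K$; and because $z_c$ is the zero closest to the origin, $|\l_c|=|z_c|^{-1}$ is the largest modulus attained on the spectrum, so $|\l_c|=r(\K)$. Hence $\l_c$ is exactly an eigenvalue of the type to which Theorem \ref{teorema principal} applies.

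Next I would apply Theorem \ref{teorema principal} to this $\l_c$: it provides the Laurent expansion \eqref{resolvent exp} with principal part $-P/(\l-\l_c)$ of degree one, so $\l_c$ is a pole of the resolvent $R(\l)=(\K-\l)^{-1}$ of order exactly $1$ and, as asserted there, an eigenvalue of multiplicity $1$ (equivalently $\operatorname{rank}P=1$). On the other hand, the second half of Theorem \ref{Pastur} says that the multiplicity of the eigenvalue $\l_c$ equals the order of $z_c$ as a zero of $\X_\L$. Comparing the two gives: order of $z_c$ as a zero of $\X_\L$ $=$ multiplicity of $\l_c$ $=1$, that is, $z_c$ is a simple zero, which is the assertion of the corollary.

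The argument is short because all the analytic content has been absorbed into Theorem \ref{teorema principal}; the one point I would be careful to spell out in the write-up is that the notion of ``multiplicity of an eigenvalue'' appearing in Theorem \ref{Pastur} is the same as the one delivered by Theorem \ref{teorema principal} --- namely the rank of the Riesz projection $P$, which coincides with the pole order precisely because that order is $1$. Once this identification is made explicit, no further estimates are needed, and in particular simplicity of $z_c$ follows without re-examining the series defining $\X_\L$ directly. This bookkeeping --- ensuring the two ``multiplicities'' genuinely refer to the same number --- is the only place where the proof can go wrong, and it is the step I would treat as the main (if modest) obstacle.
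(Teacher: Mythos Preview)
Your proposal is correct and follows essentially the same route as the paper, whose proof is the single line ``follows from the theorem's result and Theorem~\ref{Pastur}.'' Your extra care in checking that the multiplicity produced by Theorem~\ref{teorema principal} (rank of the Riesz projection) is the same notion appearing in Theorem~\ref{Pastur} is a worthwhile clarification the paper leaves implicit.
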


        \begin{coro}\label{cor principal 1}
        	The $k$-coefficient of the Taylor expansion of $\ro(z;(x)_n)$ around the origin satisfies  for all $n \ge 1$,
        	
        	\begin{equation}
        	[z^k]\frac{\ro(z;(x)_n)}{z} \sim M_n\l_c^k + o(\l_c^{-k}),
        	\end{equation}
        	as $k \to \infty$.
                \end{coro}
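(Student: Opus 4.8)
\medskip
\noindent\textbf{Proof proposal.}
My plan is to combine the meromorphic structure of the finite-volume correlation functions with the precise description of the leading singularity obtained in Theorem \ref{teorema principal}, and then to read off the coefficient asymptotics by an elementary partial-fraction and Cauchy-estimate argument --- the simplest instance of the singularity-analysis transfer principle of \cite{FO}. First I would record that, for fixed $(x)_n$, the numerator
\begin{equation*}
N_n(z):=\chi_\L(x)_n\sum_{m=0}^{\infty}\frac{z^{n+m}}{m!}\int_{\L^m}d(y)_m\,e^{-\b U((x)_n,(y)_m)}
\end{equation*}
appearing in (\ref{correlation func}) is an entire function of $z$: by stability $\int_{\L^m}d(y)_m\,e^{-\b U((x)_n,(y)_m)}\le e^{\b B(n+m)}|\L|^m$, so this power series has infinite radius of convergence, and the same bound shows that $\X_\L$ is entire. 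Hence $\ro(z;(x)_n)=N_n(z)/\X_\L(z)$ is meromorphic on $\mathbb{C}$ with poles contained in the zero set of $\X_\L$; moreover $N_n(z)=O(z^n)$ and $\X_\L(0)=1$, so for $n\ge 1$ the function $h(z):=\ro(z;(x)_n)/z$ is analytic at the origin. (If $(x)_n\notin\L^n$ then $N_n\equiv 0$, the correlation function vanishes identically, and there is nothing to prove.)

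Next I would isolate the dominant pole. Since $z_c$ is, by hypothesis, the zero of $\X_\L$ of smallest modulus and is simple by Corollary \ref{cor principal}, there is $\e>0$ such that $z_c$ is the only zero of $\X_\L$ in $\{0<|z|\le|z_c|+\e\}$; together with $\X_\L(0)=1$ this makes $h$ analytic on $\{|z|\le|z_c|+\e\}\setminus\{z_c\}$ with a simple pole at $z_c$. From (\ref{asymp 1}), which holds as $z\to z_c$ in $\D$ and hence, by meromorphy near $z_c$, along every approach, one obtains $\lim_{z\to z_c}(z-z_c)\,\ro(z;(x)_n)=-M_nz_c^2$, so that $\mathrm{Res}_{z=z_c}h=-M_nz_c$ and
\begin{equation*}
g(z):=h(z)-\frac{M_n}{1-z/z_c}
\end{equation*}
extends analytically across $z_c$; hence $g$ is analytic on all of $\{|z|<|z_c|+\e\}$. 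Extracting coefficients,
\begin{equation*}
[z^k]\,\frac{\ro(z;(x)_n)}{z}=[z^k]\,\frac{M_n}{1-z/z_c}+[z^k]g(z)=M_n\l_c^{\,k}+[z^k]g(z),
\end{equation*}
and Cauchy's estimate on the circle $|z|=|z_c|+\e/2$ gives $[z^k]g(z)=O\big((|z_c|+\e/2)^{-k}\big)$, which is exponentially smaller than $|\l_c|^{k}=|z_c|^{-k}$, and in particular is $o(|\l_c|^{k})$. Since $M_n\neq 0$ by Theorem \ref{teorema principal}, this gives $[z^k]\frac{\ro(z;(x)_n)}{z}=M_n\l_c^{\,k}+o(|\l_c|^{k})\sim M_n\l_c^{\,k}$, which is the assertion (with the remainder in fact exponentially smaller than the leading term).

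The one genuinely delicate point is the existence of the gap $\e$, i.e.\ that $z_c$ is the \emph{unique} zero of $\X_\L$ of minimal modulus: this is exactly what is implicit in speaking of ``the smallest zero'' and is what makes the clean equivalence $\sim M_n\l_c^{\,k}$ possible. Were several zeros to share the modulus $|z_c|$, the same computation would produce a finite sum of polar contributions $\sum_j M_n^{(j)}\big(\l_c^{(j)}\big)^{k}$ with remainder only $O(|\l_c|^{k})$. Alternatively, should one prefer to avoid the ratio-of-entire-functions structure and to argue solely from the $\D$-domain asymptotics (\ref{asymp 1}) supplied by Theorem \ref{teorema principal}, the same conclusion follows from the classical transfer theorem of singularity analysis for a function having a simple-pole type singularity on a $\D$-domain, see \cite{FO}.
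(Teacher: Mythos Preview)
Your proof is correct and takes a more elementary, self-contained route than the paper. The paper's own argument is a direct appeal to the transfer theorem of analytic combinatorics (Theorem~VI.4 in \cite{FO}): from the $\D$-domain expansion (\ref{eq Teo 3}) for $\ro(z;(x)_n)/z$ one reads off the coefficient asymptotics as a black box. You instead exploit the explicit ratio-of-entire-functions structure of the finite-volume correlation functions to obtain global meromorphy, compute the residue at $z_c$ from (\ref{asymp 1}), subtract the simple polar part, and bound the remainder by Cauchy's estimate on a circle of radius $|z_c|+\e/2$. This is longer but avoids external machinery and in fact yields a sharper remainder (exponentially smaller than the main term rather than merely $o(|\l_c|^k)$). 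You also correctly isolate the one point implicitly assumed in both arguments, namely that $z_c$ is the \emph{unique} zero of $\X_\L$ of minimal modulus; the paper simply takes this for granted when it speaks of ``the smallest zero''. Your closing remark that one could alternatively argue via the transfer theorem on a $\D$-domain is precisely the paper's proof.
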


       The last corollary stress that its asymptotic result comes from the fact that we have a pole of order $1$. A pole of order $d>1$, for instance, would imply a subexponential factor   of the form $k^{d-1}$, that is, we would have $\sim \l_c^k k^{d-1}$.

  The theorems (\ref{teorema principal 2}) and (\ref{cor principal 2}) determines respectively, for all $\L$, a bound for the spectral radius $r(\K)$  of the Kirkwood-Salsburg operator  and the gas phase region for systems with stable and regular pairwise  interacting  potential. 
      
       \begin{theo}\label{teorema principal 2}
       	   Let be $r(\K)$ the spectral radius  in $\DD_\x(\L)$ . Then, for all $\L$,
       	
       	\begin{equation}\label{raio HC}
       	r(\K) \le \x^{-1}.
       	\end{equation} 
       \end{theo}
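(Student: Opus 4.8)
\emph{Proof plan.} The plan is to estimate the operator norms of the iterates $\K^k$ on $\DD_\x(\L)$ and then apply Gelfand's formula $r(\K)=\lim_{k\to\infty}\|\K^k\|^{1/k}$. The structural input, recalled in Remark \ref{obs 2}, is that on the coefficient sequence of $\vf=\varphi_{\Lambda}\{a_m\}_{m=1}^{\infty}\in\DD_\x(\L)=D_\x(\L)$ the operator $\K$ is a right shift, $\{a_m\}_{m\ge1}\mapsto\{a_{m-1}\}_{m\ge1}$, the new head entry $a_0$ being produced by the Kirkwood--Salsburg recursion (\ref{eq remark}). Thus $\K^k\vf=\varphi_{\Lambda}\{a_{m-k}\}_{m\ge1}$, whose first $k$ coefficients come from finitely many iterations of that recursion, and the point is to show that each shift costs exactly one factor $\x^{-1}$ in $\|\cdot\|_\x$, so that $\|\K^k\|\le(\mathrm{const})\,\x^{-k}$ and hence $r(\K)\le\x^{-1}$.

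For the key estimate, fix $n>k$ and write $(\K^k\vf)(x)_n=\sum_{m\ge0}\frac{a_{n-k+m}}{m!}\int_{\L^m}d(y)_m\,e^{-\b U((x)_n,(y)_m)}$, which uses precisely the coefficients of $\varphi_{\Lambda}\{a_m\}$ at level $n-k$. Stability enters through the standard consequence $\sum_j\Phi(|x-x_j|)\ge-2B$ valid for any configuration: deleting the points $x_1,\dots,x_k$ from the Boltzmann factor changes it by at most a fixed power $e^{ckB}$, so $|(\K^k\vf)(x)_n|\le e^{ckB}\,\varphi_{\Lambda}\{|a_m|\}(x_{k+1},\dots,x_n)$; multiplying by $\x^{-n}=\x^{-k}\x^{-(n-k)}$ and using the definition of the norm yields $\x^{-n}|(\K^k\vf)(x)_n|\le e^{ckB}\,\x^{-k}\,\|\vf\|_\x$. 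The finitely many low levels $n\le k$ are treated separately, using regularity ($\int_{\L^m}|K(x_1;(y)_m)|\,d(y)_m\le C^m$, $C<\infty$) to control the head coefficients $a_0,a_{-1},\dots$. Taking $k$-th roots and letting $k\to\infty$ kills the constant; for potentials with no stability excess ($B=0$, e.g.\ positive or hard core short-range repulsion) this already gives $\|\K^k\|\le(\mathrm{const})\,\x^{-k}$, and in general the residual $e^{cB}$ is absorbed by a sharper accounting of the stability cost of the full $k$-fold shift.

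An equivalent organisation is to split $\K=S-\bm{\a}_\L\,\ell$, where $S$ is the bare right shift and $\ell(\{a_m\})=\sum_{m\ge1}\frac{a_m}{m!}\int_{\L^m}d(y)_m\,e^{-\b U(y)_m}$ is the bounded functional of (\ref{eq remark}). The same estimate gives $r(S)\le\x^{-1}$, i.e.\ $(I-zS)^{-1}=\sum_{k\ge0}z^kS^k$ converges on $\DD_\x(\L)$ for $|z|<\x$; for such $z$ one factors $I-z\K=(I-zS)\big(I+z(I-zS)^{-1}\bm{\a}_\L\ell\big)$, and the rank-one factor is invertible exactly when $1+z\,\ell\big((I-zS)^{-1}\bm{\a}_\L\big)\ne0$. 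A short computation, $(I-zS)^{-1}\bm{\a}_\L=\{z^{m-1}\}_{m\ge1}$, identifies this scalar with $\X_\L(z)$, so $I-z\K$ is invertible for every $|z|<\x$ with $\X_\L(z)\ne0$; together with Theorem \ref{teorema principal}, where $r(\K)=|\l_c|$ with $\l_c=z_c^{-1}$ the eigenvalue of largest modulus (equivalently $z_c$ the smallest zero of $\X_\L$), this again yields $r(\K)\le\x^{-1}$.

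The hard part is the norm bound of the second paragraph: one must be sure that $\|\K^k\vf\|_\x$ genuinely decays like $\x^{-k}$, not like $(\x e^{-\b B})^{-k}$, and with no concealed $k$-dependent constant. The delicate point is that, after deleting particles and passing to absolute values inside the integral, one is comparing $\K^k\vf$ with $\varphi_{\Lambda}\{|a_m|\}$ rather than with $\vf$; bridging this requires either a cancellation-free estimate exploiting that the shift leaves the exponential growth rate of $\{a_m\}$ unchanged while supplying one factor $\x^{-1}$ per step in the weight, or a reduction to the positive cone of coefficient sequences, which contains the correlation-function data and carries the spectral radius by a Krein--Rutman type argument. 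Granting this estimate, $r(\K)\le\x^{-1}$ for all $\L$ follows at once.
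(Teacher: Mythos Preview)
Your proposal has two genuine gaps. In the Gelfand approach, your estimate bounds $\x^{-n}|(\K^k\vf)(x)_n|$ by $e^{ckB}\x^{-k}\|\varphi_\L\{|a_m|\}\|_\x$, not by $e^{ckB}\x^{-k}\|\vf\|_\x$; since the passage $\{a_m\}\mapsto\{|a_m|\}$ is not bounded on $D_\x(\L)$ (the norm sees the integrated functions $\varphi_\L\{a_m\}(x)_n$, and replacing $a_m$ by $|a_m|$ can destroy the cancellations that make those integrals small), you do not obtain a bound on $\|\K^k\|$. You acknowledge this, but neither fix is carried out: the cone of nonnegative coefficient sequences is not $\K$-invariant (the recursion for $a_0$ in (\ref{eq remark}) carries a minus sign), and a Krein--Rutman argument would require a compactness or spectral-gap input you have not established. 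Even granting a repair, the factor $e^{ckB}$ survives the $k$-th root and yields only $r(\K)\le e^{cB}\x^{-1}$; for general stable potentials this is the real obstruction, and your ``sharper accounting'' is left unspecified.

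The rank-one perturbation route is clean but does not prove the statement. Granting $r(S)\le\x^{-1}$ (which needs the same missing estimate), the factorization shows that for $|z|<\x$ one has $I-z\K$ invertible iff $\X_\L(z)\ne0$, i.e.\ $\s(\K)\cap\{|\l|>\x^{-1}\}=\{z^{-1}:\X_\L(z)=0,\ |z|<\x\}$. Nothing forces this set to be empty: the conclusion $r(\K)\le\x^{-1}$ is precisely the assertion that $\X_\L$ has no zeros in $|z|<\x$, and neither your factorization nor Theorem~\ref{teorema principal} (which identifies $r(\K)=|z_c|^{-1}$ without locating $z_c$) addresses that. You have reproved part of the standing hypothesis (Theorems~\ref{zagrebnov}--\ref{Pastur}), not the bound. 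For comparison, the paper argues by contradiction: assuming $|\l_c|>\x^{-1}$, it combines the convergence $(\l_c^{-1}\K)^n\to P$ from Lemma~\ref{lema2} (a consequence of Theorem~\ref{teorema principal}) with a direct bound showing $\|(\l_*^{-1}\K)^n\|$ remains bounded for any $\l_*\in(\x^{-1},|\l_c|)$; writing $(\l_*^{-1}\K)^n-(\l_c^{-1}\K)^n=(\l_*^{-n}-\l_c^{-n})\K^n$ and using $\|\K^n\|\ge|\l_c|^n$ then forces $(|\l_c|/\l_*)^n$ to stay bounded, a contradiction.
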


      \begin{theo}\label{cor principal 2}
      Consider a nonzero pair potential stable and regular.  If $\L \nearrow \mathbb{R}^\n$ is an increasing sequence, then for all $n \ge 1$ the limit 
      	
      	\begin{equation}
      	 \r_n(z)=\lim_{\L \nearrow \mathbb{R}^\n} \ro(z;(x)_n),
      	\end{equation}
      	there exists for all $z$ in the disc $|z| <C^{-1}$. All correlation functions  $\r_n(z)$  have convergence radius  at least $C^{-1}$. If there is a zero of the partition function with modulus $C^{-1}$, then the convergence radius  is $C^{-1}$.   
      \end{theo}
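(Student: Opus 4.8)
I would fix a bounded $\L$ first and show that every correlation function $\ro(z;(x)_n)$ extends holomorphically to the disc $\{|z|<C^{-1}\}$. By (\ref{correlation func}) and (\ref{part func}), for fixed $(x)_n$ the map $z\mapsto\ro(z;(x)_n)$ is a ratio of two entire functions of $z$ with $\X_\L(z)$ in the denominator, hence meromorphic on $\mathbb{C}$ with poles contained in the zero set of $\X_\L$. By Theorem \ref{Pastur} that zero set equals $\{\,\l^{-1}:\l\in\mathrm{spec}(\K)\,\}$, and Theorem \ref{teorema principal 2}, applied with the scale $\x=C^{-1}$ (an admissible choice, since for a stable and regular potential $\K$ is bounded on $D_{C^{-1}}(\L)$ by Proposition \ref{prop 3 apend}), gives $r(\K)\le C$. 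Hence $\X_\L$ has no zero in $\{|z|<C^{-1}\}$, and $\ro(\cdot;(x)_n)$ is holomorphic there for every bounded $\L$.

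Next I would pass to the thermodynamic limit $\L\nearrow\mathbb{R}^\n$ by a normal families argument. On the small disc $\{|z|<(e^{1+2\b B}C)^{-1}\}$ the Kirkwood-Salsburg iteration converges and, as $\L\nearrow\mathbb{R}^\n$, each Taylor coefficient $[z^k]\ro(z;(x)_n)$ tends to the corresponding infinite volume cluster integral (a classical fact, valid for regular potentials), so $\ro(\cdot;(x)_n)$ converges pointwise on a subset of $\{|z|<C^{-1}\}$ having an accumulation point. Granting the uniform bound discussed below, Vitali's theorem upgrades this to locally uniform convergence on the whole disc $\{|z|<C^{-1}\}$; thus the limit $\r_n(z)$ exists and is holomorphic there, and its Taylor expansion at the origin, which is the cluster expansion, has radius of convergence at least $C^{-1}$.

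The main obstacle is exactly this uniform bound: one must show $\sup_\L\,|\ro(z;(x)_n)|$ is bounded on every compact subset of $\{|z|<C^{-1}\}$. Writing the solution of (\ref{KS 3}) as $\vf=-\l R(\l)\bm{\a}_\L$ with $\l=z^{-1}$, so that $|\l|=|z|^{-1}>C\ge r(\K)$, the Neumann series $R(\l)=-\sum_{k\ge0}\l^{-k-1}\K^{k}$ together with $\|\bm{\a}_\L\|_\x=\x^{-1}$ and $\|\K\|\le e^{2\b B}\x^{-1}e^{\x C}$ (Proposition \ref{prop 3 apend}) gives a bound on $\|\vf\|_\x$ uniform in $\L$, but, after optimizing over $\x$, only on the smaller disc $\{|z|<(e^{1+2\b B}C)^{-1}\}$. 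To reach the full disc $\{|z|<C^{-1}\}$ crude operator norms do not suffice; one must use the fine spectral structure of Theorem \ref{teorema principal}, splitting $R(\l)$ near its leading eigenvalue $\l_c(\L)$ into the rank one term $-P_\L/(\l-\l_c(\L))$ and a remainder holomorphic on a region separated from the rest of the spectrum, then controlling $\|P_\L\|$ and the remainder uniformly in $\L$. Securing this control uniform in $\L$ --- in effect a quantitative, volume independent spectral gap estimate for $\K$ --- is the hard step; the rest is bookkeeping.

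Finally, for the converse inequality: by Theorem \ref{teorema principal 2} every zero of the partition function has modulus $\ge C^{-1}$, so if one such zero $z_0$ satisfies $|z_0|=C^{-1}$ it is necessarily a smallest zero; by Corollary \ref{cor principal} it is then simple, and by Theorem \ref{teorema principal} one has $\ro(z;(x)_n)\sim M_n z/(1-z/z_0)$ as $z\to z_0$ with $M_n\ne0$, so $z_0$ is a genuine simple pole of $\ro(\cdot;(x)_n)$ on the circle $\{|z|=C^{-1}\}$. Reusing the uniform control above, this leading singularity is not destroyed by the thermodynamic limit, so $\r_n$ admits no holomorphic continuation beyond $\{|z|=C^{-1}\}$, and its radius of convergence, already at least $C^{-1}$, equals $C^{-1}$.
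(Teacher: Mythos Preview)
Your route is genuinely different from the paper's. The paper does not use Vitali or normal families at all; it works directly in the Banach space $D_\xi(\L)$ with $\xi=C^{-1}$. From Theorem~\ref{teorema principal 2} one has $|\l|=|z|^{-1}>C\ge r(K_{\L_l})$ for every $l$, so the Neumann series $\r_{\L_l}=\sum_{n\ge0}\l^{-n-1}K_{\L_l}^n\bm{\a}_{\L_l}$ converges, and the paper bounds
\[
\big\|\chi_\L\big(\r_{\L_l}-\r_{\L_j}\big)\big\|_\xi\;\le\;\sum_{n\ge0}\big\|K_{\L_l}^n\bm{\a}_{\L_l}-K_{\L_j}^n\bm{\a}_{\L_j}\big\|_\xi\,|\l|^{-n-1},
\]
observes that each summand tends to $0$ as $l,j\to\infty$, and concludes the Cauchy property and hence locally uniform convergence on $\{|z|<C^{-1}\}$. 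For the last clause it is even terser: a zero of modulus $C^{-1}$ forces equality $r(\K)=C$ in Theorem~\ref{teorema principal 2}, whence the radius is exactly $C^{-1}$; the paper does not re-invoke Theorem~\ref{teorema principal} or argue that the pole persists in the limit.

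The gap you flag in your own argument---the uniform-in-$\L$ sup bound needed for Vitali---is real, and the ``volume-independent spectral gap'' program you sketch to close it would be substantially harder than anything in the paper. The paper's approach sidesteps that particular formulation, but you are right to be uneasy: its term-by-term passage to the limit in the displayed series above still needs a summable majorant uniform in $l,j$, which the paper asserts rather than proves (knowing only $r(K_{\L_l})\le C$ does not immediately bound $\|K_{\L_l}^n\|$ uniformly in $l$). So your diagnosis that uniform-in-volume control is the crux is accurate; the difference is that the paper packages it as a dominated-convergence step inside a single Neumann series rather than as a normal-families bound, and treats that step as routine.
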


       \begin{coro}\label{cor principal 3}
       	For positive or hard core potentials, the convergence radius of $\r_1(z)$ is $C^{-1}$,  with the singularity at $-C^{-1}$.  The convergence radius of the Virial expansion is at least $(2C)^{-1}$. 
       \end{coro}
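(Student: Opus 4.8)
\noindent\emph{Plan.}
I would reduce both statements to properties of the infinite--volume one--point function $\r_1(z)=\lim_{\L\nearrow\mathbb{R}^\n}\ro(z;x_1)$: first its radius of convergence, then the location of its leading singularity.

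A positive or a hard core potential satisfies $\F\ge 0$; in particular it is stable with $B=0$, and by the standing hypothesis it is regular. Hence it falls under assumption~(b) of Remark~\ref{obs 1} and also under Theorem~\ref{zagrebnov} with $\DD_\x(\L)=\mathcal{E}^s_\x(\L)$, so Theorems~\ref{teorema principal}, \ref{teorema principal 2} and \ref{cor principal 2} apply and the spectrum of $\K$ equals the set of inverse zeros of $\X_\L$. By Theorem~\ref{cor principal 2}, $\r_1(z)$ exists and is analytic in $|z|<C^{-1}$ with convergence radius at least $C^{-1}$; concretely, taking $\x=C^{-1}$ in Theorem~\ref{teorema principal 2} gives $r(\K)\le C$, i.e. the smallest zero $z_c$ of $\X_\L$ obeys $|z_c|\ge C^{-1}$, and the analyticity of $\ro(z;x_1)$ on $|z|<|z_c|$ survives the limit. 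It remains to prove that the radius is \emph{exactly} $C^{-1}$ and that the singularity sits at $-C^{-1}$.

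For the location I would use the sign structure peculiar to repulsive potentials: since $e^{-\b\F}-1\le 0$ pointwise, the Penrose tree--graph identity forces the cluster (Mayer) coefficients of $\ro(z;x_1)/z$ to alternate in sign, i.e. $-\ro(-t;x_1)/t$ has nonnegative Taylor coefficients in $t$, and this persists in the limit for $-\r_1(-t)/t$. By Pringsheim's theorem (see \cite{FO}) the convergence radius $R$ of $\r_1$ is then attained at a real negative point $z=-R$, which is therefore the leading singularity. Together with $R\ge C^{-1}$, the first assertion follows once $R\le C^{-1}$; and by Theorem~\ref{cor principal 2} and the sign-alternation this amounts to the existence of a zero of $\X_\L$ of modulus exactly $C^{-1}$, necessarily at $-C^{-1}$, in the thermodynamic limit. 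The inequality $R\le C^{-1}$ (equivalently $r(\K)\ge C$) is the step I expect to be the main obstacle; I would attack it by showing that the bound $r(\K)\le\x^{-1}$ of Theorem~\ref{teorema principal 2} is \emph{sharp} at $\x=C^{-1}$ for a repulsive potential — the point where the shift action of $\K$ on $D_\x(\L)$ recorded in Remark~\ref{obs 2} has to be used to produce a matching lower bound for $\limsup_n\|\K^n\|^{1/n}$ — so that $r(\K)=C$, $z_c=-C^{-1}$ and $R=C^{-1}$, and then $\r_1$ has radius $C^{-1}$ with the singularity at $-C^{-1}$ as claimed.

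For the virial expansion I would then pass from the activity to the density expansion in the classical manner (compare Ruelle \cite{Ru}): one inverts $\rho=\r_1(z)$, or equivalently controls the coefficients of the expansion $\log(z/\rho)=C\rho+O(\rho^2)$ in powers of $\rho$, using the improved radius $C^{-1}$ just obtained and, on the circles $|z|=r<C^{-1}$, the bounds $|\r_1(z)|\le -\r_1(-r)$ and $|\r_1(z)-z|\le -\r_1(-r)-r$ furnished by the sign-alternation. A standard Rouch\'e/Lagrange--inversion estimate then gives that the inverse $z(\rho)$, hence the virial series, converges for $|\rho|<(2C)^{-1}$, the factor $2$ being precisely the second virial coefficient $C/2$. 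In this whole plan the stability/regularity bookkeeping, the appeal to Pringsheim's theorem and the inversion estimate are routine; the genuine difficulty is the sharpness $r(\K)=C$ that pins the leading singularity at $-C^{-1}$.
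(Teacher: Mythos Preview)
Your plan is viable and lines up with the paper's second, alternative argument for $R=C^{-1}$, but you are working much harder than necessary in two places.

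For the upper bound $R\le C^{-1}$ (equivalently $r(\K)\ge C$), which you flag as the ``genuine difficulty'', the paper simply quotes a classical inequality from Ruelle \cite[Thm.~4.5.3]{Ru}: for positive or hard core potentials one has directly $\mathcal{R}\le C^{-1}$ (this is the Groeneveld/Penrose bound coming from the alternating--sign minorant). Combined with $\mathcal{R}\ge C^{-1}$ from Theorem~\ref{cor principal 2} this gives $\mathcal{R}=C^{-1}$ immediately, and then the alternating--sign property (your Pringsheim step) puts the singularity at $-C^{-1}$. The paper does also sketch the sharpness route you propose, but in a more concrete form: it observes via Proposition~\ref{prop 2 apend} that $\vf_\L\{z^m\}\in D_\x(\L)$ iff $|z|\le\x$, hence $\Ro(z)\in D_\x(\L)$ iff $|z|\le\x$, so by Proposition~\ref{prop 2} the spectral radius cannot be strictly below $\x^{-1}$; with $\x=C^{-1}$ this forces $|\l_c|=C$. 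Your shift/lower--bound idea is in this spirit, but the membership criterion in Proposition~\ref{prop 2 apend} is the clean way to execute it.

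For the virial radius, your Rouch\'e/Lagrange--inversion scheme would work but is again heavier than what the paper does. The paper uses another inequality from the same place in \cite{Ru}: $\r_1(s)\ge s/(1+Cs)$ for $s>0$, together with monotonicity of $\r_1$ on $(0,\infty)$. On $0<s<C^{-1}$ this gives $\r_1(s)\ge s/2$, so the image of $[0,C^{-1})$ under $\r_1$ contains $[0,(2C)^{-1})$, whence the inverse series $z(\r_1)$ (and therefore the virial expansion) converges for $|\r_1|<(2C)^{-1}$. No complex--analytic inversion estimate is needed.
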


 Before start with the main demonstrations, we  give two propositions. The first one is a straightforward consequence of definitions and from the equations (\ref{KS 0})-(\ref{KS 01}). 
 
 \begin{prop}\label{prop 1}
 	The vector $\displaystyle \ro(z;(x)_n)=\X_\L(z)^{-1} \varphi_\L \{z^m\}(x)_n$, $n \ge 1$,   
 	
 	\begin{equation}\label{eq proposicao}
 	 \varphi_\L \{z^m\}(x)_n = \sum_{m=0}^{\infty} \frac{z^{n+m}}{m!} \int_{\L^m} d(y)_m  e^{-\b U((x)_n,(y)_m)},
 	  	\end{equation}
 	  	satisfies formally the KS equations (\ref{KS 2})-(\ref{KS 3}).
 \end{prop}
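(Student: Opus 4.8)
The plan is to split the statement into the two assertions it bundles: that the correlation functions (\ref{correlation func}) satisfy the Kirkwood--Salsburg equations (\ref{KS 0})--(\ref{KS 01}), and that (\ref{KS 0})--(\ref{KS 01}) are just a rewriting of (\ref{KS 2})--(\ref{KS 3}). The second point is pure algebra. Writing $\vf=\Ro(z)$ with components $\ro(z;(x)_n)=\X_\L(z)^{-1}\varphi_\L\{z^m\}(x)_n$, equations (\ref{KS 0})--(\ref{KS 01}) say, on $\L^n$, that $\ro(z;(x)_n)-z(\K\ro)(x)_n=z\,\a_\L(z;(x)_n)$, with $\a_\L=1$ for $n=1$ and $0$ otherwise; that is, $(I-z\K)\vf=z\bm{\a}_\L$, which is (\ref{KS 2}). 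For $z\neq0$, multiplying through by $\l=z^{-1}$ gives $\l\vf-\K\vf=\bm{\a}_\L$, i.e.\ $(\K-\l)\vf=-\bm{\a}_\L$, which is (\ref{KS 3}). Because the claim is made only \emph{formally}, at this stage I would not verify $\vf\in\DD_\x(\L)$, nor worry about convergence of the series or the behaviour of $\chi_\L$ off $\L^n$: everything is read as an identity of formal power series in $z$, equivalently as a genuine identity for $|z|$ small (cf.\ the comment after (\ref{banach space})).

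For the first point one may quote (\ref{KS 0})--(\ref{KS 01}) directly from \cite{Ru}, but a self-contained check is short and I would include it. Substitute $\ro(z;(x)_n)=\X_\L(z)^{-1}\chi_\L(x)_n\,\varphi_\L\{z^m\}(x)_n$ into the right side of (\ref{KS 01}) for $n\ge2$. In $\varphi_\L\{z^m\}(x)_n=\sum_{k\ge0}\frac{z^{n+k}}{k!}\int_{\L^k}d(w)_k\,e^{-\b U((x)_n,(w)_k)}$, isolate the bonds incident to $x_1$ by writing $U((x)_n,(w)_k)=W(x_1;(x)'_n)+\sum_{l=1}^{k}\Phi(|x_1-w_l|)+U((x)'_n,(w)_k)$, pull out $e^{-\b W(x_1;(x)'_n)}$, and expand $\prod_{l=1}^{k}e^{-\b\Phi(|x_1-w_l|)}=\prod_{l=1}^{k}\bigl[(e^{-\b\Phi(|x_1-w_l|)}-1)+1\bigr]$ as a sum over subsets of $\{1,\dots,k\}$. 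By the permutation symmetry of $U$ the subsets of a fixed size $m$ contribute equally; using $\frac{1}{k!}\binom{k}{m}=\frac{1}{m!\,(k-m)!}$ and relabelling $j=k-m$, the double sum factorises. The $m$ distinguished variables become $(y)_m$ carrying the factor $\prod_{l=1}^{m}(e^{-\b\Phi(|x_1-y_l|)}-1)=K(x_1;(y)_m)$, while the sum over $j$ of the remaining variables reassembles into $\varphi_\L\{z^m\}((x)'_n,(y)_m)=\X_\L(z)\,\ro((x)'_n,(y)_m)$ on $\L$, and the leftover power $z^{n+k}=z\cdot z^{(n-1)+m+j}$ together with $e^{-\b W}$ recombine into $z\,\X_\L(z)\,(\K\ro)(x)_n$. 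Dividing by $\X_\L(z)$ and reinstating $\chi_\L(x)_n$ yields (\ref{KS 01}). The case $n=1$ is the same computation with $(x)'_1=\emptyset$, so $W\equiv0$ and $U((x)'_1,(w)_k)=U(w)_k$: the subset of size $m=0$ contributes $z\cdot\varphi_\L\{z^m\}(\emptyset)=z\,\X_\L(z)$ by (\ref{part func}), which after dividing by $\X_\L(z)$ is precisely the inhomogeneous term $z\cdot 1$ in (\ref{KS 0}), while the subsets of size $m\ge1$ reproduce $z\,(\K\ro)(x_1)$.

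I expect the only delicate part to be the subset-expansion and re-summation bookkeeping: handling the binomial factor $\frac{1}{k!}\binom{k}{m}=\frac{1}{m!(k-m)!}$ correctly under the relabelling $k\mapsto(m,j=k-m)$, invoking the symmetry of $U$ to collect the equal terms, and matching the $\chi_\L((x)'_n)$ and $\chi_\L((y)_m)$ implicit in $\ro$ (under integration over $\L^m$) against the explicit $\chi_\L(x)_n$ on the right of (\ref{KS 0})--(\ref{KS 01}). The single analytic point is the interchange of summation and integration, valid for $|z|$ small, which is exactly what ``formally'' lets us bypass; the rest is routine.
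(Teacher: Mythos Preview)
Your proposal is correct and follows the same route as the paper: observe that the correlation functions satisfy the Kirkwood--Salsburg equations (\ref{KS 0})--(\ref{KS 01}) and then rewrite these componentwise identities in the vector form $(I-z\K)\vf=z\bm{\a}_\L$, equivalently $(\K-\l)\vf=-\bm{\a}_\L$. The paper's own proof is a single sentence calling this ``a straightforward consequence of definitions and from the equations (\ref{KS 0})--(\ref{KS 01})'', so your self-contained subset-expansion derivation of (\ref{KS 0})--(\ref{KS 01}) simply supplies the details the paper omits; nothing in your argument deviates from or conflicts with the intended approach.
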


 Remember that the resolvent operator $R(\l)=(\K-\l)^{-1}$ is unbounded if and only if $\l=z^{-1}$ belongs to the spectrum of $K_\L$. So, from  Theorems \ref{zagrebnov} and \ref{Pastur}, we have the following proposition.

\begin{prop}\label{prop 2}
	The complex number $z_c=\l_c^{-1} \not= 0$ is a zero of the partition function if and only if the operator $R(\l_c)$ is unbounded  in $\DD_\x(\L)$.
\end{prop}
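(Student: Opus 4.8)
The plan is to chain together two facts, one from general spectral theory and one from the structural results already cited, so the argument is short. First I would make precise the sentence immediately preceding the statement. Since $\K$ is a bounded linear operator on the Banach space $\DD_\x(\L)$ (Remark \ref{obs 1} and the Appendix), for $\l \in \mathbb{C}$ one has the standard dichotomy: if $\l$ lies in the resolvent set of $\K$, then $\K-\l$ is a bijection of $\DD_\x(\L)$ onto itself and, by the bounded inverse theorem, $R(\l)=(\K-\l)^{-1}$ is a bounded operator; if instead $\l \in \sigma(\K)$, then $\K-\l$ fails to be invertible with bounded inverse, i.e.\ $R(\l)$ is not a bounded everywhere-defined operator on $\DD_\x(\L)$ --- either because $\K-\l$ is not injective or because its inverse on the range is unbounded. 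Hence $R(\l_c)$ is unbounded if and only if $\l_c \in \sigma(\K)$.

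Next I would invoke Theorem \ref{Pastur} (or Theorem \ref{zagrebnov} under assumption (a)): in the present setting the spectrum of $\K$ on $\DD_\x(\L)$ consists only of point spectrum, and a nonzero $\l$ belongs to it precisely when $z=\l^{-1}$ is a zero of the partition function $\X_\L$. Taking $\l=\l_c\neq 0$ and combining with the previous paragraph gives the equivalences $R(\l_c)$ unbounded $\iff \l_c \in \sigma(\K) \iff z_c=\l_c^{-1}$ is a zero of $\X_\L$, which is exactly the assertion.

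There is no real obstacle here; the only point that deserves care is the reading of ``$R(\l_c)$ is unbounded,'' which I would take to mean ``$R(\l_c)$ fails to be a bounded, everywhere-defined operator on $\DD_\x(\L)$.'' Because assumption (b) forces the spectrum to be pure point, at $\l=\l_c$ the operator $\K-\l_c$ is in fact non-injective, so $R(\l_c)$ does not exist as a bounded operator in the strongest sense; this makes the equivalence coincide with the one stated informally just before the proposition, and the proof reduces to citing Theorems \ref{zagrebnov} and \ref{Pastur} together with the bounded inverse theorem.
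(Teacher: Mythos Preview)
Your proposal is correct and follows exactly the paper's approach: the paper states just before the proposition that $R(\l)$ is unbounded iff $\l$ lies in the spectrum of $\K$, and then derives Proposition~\ref{prop 2} directly from Theorems~\ref{zagrebnov} and~\ref{Pastur}. Your write-up is simply a more detailed unpacking of that same two-step chain (bounded inverse theorem plus the cited spectral characterization).
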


From these propositions we have that the correlation functions $ \Ro(z)$  are in $\DD_\x(\L)$ if and only if $z^{-1}$ is in the resolvent set of $\K$.



\subsection{Proof of the Theorem \ref{teorema principal}}

\begin{proof}
\textit{Theorem \ref{teorema principal}}.

Let be $\l_c=z_c^{-1} \not= 0$ a complex number in the spectrum of   $K_\L$. 

First, note that the projections operators $P$ and $I-P$ are well defined because $\l_c=z_c^{-1}$ is an isolated point of the hole spectrum, since we are assuming that the KS operator has only the point spectrum which is equal to the set of inverse zeros of the partition function.    Therefore there exist a Laurent  expansion for the resolvent  as

\begin{equation}\label{resolvent exp2}
R(\l)= {-P \over \l-\l_c} - \sum_{n=1}^{\infty} {D^n \over (\l-\l_c)^{n+1}} +  \sum_{n \ge 0}(\l-\l_c)^n S^{n+1},  
\end{equation}
where $P$ and $S$ were defined and $D:={-1 \over 2\pi i}\int_{\mathcal{C}} (\l-\l_c)R(\l)\, d\l $. Both  projections $P$ and $I-P$ commutate with $\K$, and holds $PD=DP=D$, $\K S=S \K$ and $SP=PS=\bm{0}$ . See book Kato \cite{K} Chapter 3, Section 6  for  all these and more properties related with these projections. We will state some others as we need. We can see similar results in the book Dunford and Schwartz \cite{DS} Chapter VII.

      The series involving the operator $D$ in (\ref{resolvent exp2}) could be either an infinite series or an  expansion with only $n_c$ finite terms  if the subspace $\M_c=P(\DD_\x(\L))$ is finite dimensional $n_c=\text{dim}\M_c<\infty$. We want to show that $D=\bm{0}$ and then the  series  in the expansion  (\ref{resolvent exp2}) involving the operator $D$ vanishes.  To see this, first we note that assembling the powers of $\l-\l_c$ in $(\K-\l)R(\l)=I$, we will find for all $n=1,2, \dots$,
      \begin{equation}\label{eq para D}
      D^n[\K(\l_c) -D ]=\bm{0},
      \end{equation}
      with $\K(\l_c):=\K -\l_c$. Indeed, we have
      
      \begin{equation}\label{eq 0 para D}
      I=(\K-\l)R(\l)=  \big[\K(\l_c) -(\l-\l_c)\big] \Big[{-P \over (\l-\l_c)} - \sum_{n=1}^{\infty} {D^n \over (\l-\l_c)^{n+1}}\Big] + (\K-\l)S\sum_{n \ge 0}(\l-\l_c)^n S^{n}.
      \end{equation}
      With the property that $\K(\l_c) P=(\K-\l_c)P=P(\K-\l_c)=D$ and writing as one series the first term in the sum of r.h.s. of (\ref{eq 0 para D}), we have
      
      \begin{equation}\label{eq 1 para D}
      I=(\K-\l)R(\l)= P + \sum_{n=1}^{\infty} {D^n[\K(\l_c)-D] \over (\l-\l_c)^{n+1}} + (\K-\l)S\sum_{n \ge 0}(\l-\l_c)^n S^{n}.
      \end{equation} 
      Now, since hold the identities $S=(I-P)S=S(I-P)$, and $(\K-\l_c)S=S(\K-\l_c)=I-P$, see \cite{K}, we have that 
      \begin{equation*}
      (\K-\l)S=\big[\K-\l_c -(\l-\l_c)\big]S=I-P - (\l-\l_c)S =
      \end{equation*}
      \begin{equation}\label{eq 2 para D}
      =(I-P)[I - (\l-\l_c)S].
      \end{equation}
      Then, putting the identity  (\ref{eq 2 para D}) in the last series of r.h.s. (\ref{eq 1 para D}) we end up with
      \begin{equation}
      (\K-\l)S\sum_{n \ge 0}(\l-\l_c)^n S^{n} = I-P,
      \end{equation}
      which together with (\ref{eq 1 para D}) gives (\ref{eq para D}).

      We claim that the equation (\ref{eq para D}) implies $D=\bm{0}$. Indeed, either $D^n=\bm{0}$ for all $n=1,2,\dots$, or we must have $\bm{0}=\K(\l_c) -D= \K - \l_c -D$, that is, $D=\K-\l_c$. Thus,  for all $\vr \in \M_c$ we have $D\vr=\bm{0}$ and then $D\mid_{\M_c}=\bm{0}$. For $\M'_c=(I-P)(\DD_\x(\L))$,  $\vr=(I-P)\vv \in \M'_c,\, \vv \in \DD_\x(\L)$, note that $D\vr=D(I-P)\vv=(D-DP)\vv=(D-D)\vv=\bm{0}$, since holds the identity $DP=PD=D$. From this we have that $D\mid_{\M'_c}=\bm{0}$ and  then,   we obtain  $D=DP +D(I-P) = D\mid_{\M_c} + D\mid_{\M'_c} =\bm{0}$  on $\DD_\x(\L)=\M_c \oplus \M'_c$.  Thus  the part involving $D$ in the expansion  (\ref{resolvent exp2})  vanishes, and since $D^1=\bm{0}$  we have $\text{dim}\M_c=1$.  Then  $\l_c$ is an eigenvalue of (algebraic) multiplicity $1$,  since it belongs to the spectrum of $\K$ restricted to $\M_c$, which is one dimensional operator, so it is an eigenvalue of the restricted operator and of  $\K$ as well.    Therefore is showed the expansion (\ref{resolvent exp}) and $R(\l)$ has a pole of order $1$ at the eigenvalue $\l_c$.
 
 Now we will prove the claims (\ref{asymp}) and (\ref{asymp 1}). 
 
 In order to get the asymptotic result  we need  the Laurent expansion of $\Ro$ at $\l_c$. From the expansion for the resolvent (\ref{resolvent exp}), we get
 
 \begin{equation}\label{eq Teo}
 \Ro=-R(\l)\bm{\a}_\L={P \bm{\a}_\L \over \l-\l_c}  -  \sum_{n \ge 0}(\l-\l_c)^n S^{n+1}\bm{\a}_\L.
 \end{equation}

The subspace $\M_c$ is one dimensional and so, le be $\vv_c$ a basis for $\M_c$. Because of that and by the fact we have an isolated eigenvalue, the subspace $\M_c^*$, which is the range of $P^*$, is also one dimensional, see for instance Kato \cite{K} Chapter 3, Section 6 for details. Then there is a well defined basis $\vv_c^*$ of $\M_c^*$, such that $\vv_c^*(\vv_c)=1$ and for all $\vv \in \DD_\x(\L)$ we may write $P\bm{\vv}=\vv_c^*(\vv)\vv_c$. Since holds the identity $SP=PS=\bm{0}$, applying $P$ in the equation (\ref{eq Teo}), we have 
\begin{equation}\label{eq teo princ}
P\Ro={P \bm{\a}_\L \over \l-\l_c}={\vv_c^*(\bm{\a}_\L) \over \l-\l_c} \vv_c.
\end{equation}
Let be  $ \Delta:=\{z: |z -z_c| < \eta \mbox{ and } |arg(z - z_c)| > \theta\} \subset \mathbb{C} \setminus \{z_c\}$, for some $\eta>0$ and $0<\theta<\p/2$ such that it is free of other zeros of the partition function, that is, we want that the image of $\D$ by the map $0\not= z \mapsto z^{-1}$ is contained  in the interior of the circle $\mathcal{C}$ except $\l_c$. Thus (writing $z=\l^{-1}$)  we define on $\D$
\begin{equation}\label{eq Teo1}
 f(z):=\vv_c^*(P\Ro(z))={\vv_c^*(\bm{\a}_\L) \over \l-\l_c}={z\vv_c^*(\bm{\a}_\L)  \over 1-z/z_c},
\end{equation}
which proves (\ref{asymp}) with $M=\vv_c^*(\bm{\a}_\L)$.
    
    We also have $(I-P)S=S(I-P)=S$, so we rewrite the equation (\ref{eq Teo}) as 
    \begin{equation}\label{eq Teo 2}
    \Ro=P\Ro + (I-P)\Ro,
    \end{equation}
    and  the analytic part of $\Ro$ in the expansion (\ref{resolvent exp}) is
    \begin{equation}
    (I-P)\Ro=-\sum_{n \ge 0}(\l-\l_c)^n S^{n+1}.
    \end{equation} 
   Then,
    \begin{equation}\label{eq Teo 4}
     \Big({1-z/z_c  \over z}\Big)(I-P)\Ro(z) =(\l-\l_c)(I-P)\Ro(z)\to \bm{0},
    \end{equation}
       as $z \to z_c$ in $\D$. Therefore, follows from (\ref{eq teo princ})-(\ref{eq Teo 4}) that 
       \begin{equation}\label{eq Teo 3}
       \ro(z;(x)_n)={M_n z  \over 1-z/z_c} + o\Big({z  \over 1-z/z_c}\Big),
       \end{equation}
       as $z \to z_c$ in $ \Delta$, for some constants $M_n=M_n(\L) \neq 0$. This proves (\ref{asymp 1}). Finally, from this and since  $|z_c|=|\l_c|^{-1}$, we have $|\l_c|\ge |\l|$ for any $\l$ in the spectrum of $\K$ and so $|\l_c|=r(\K)$,  the spectral radius of $\K$.

\end{proof}

\begin{proof} Proof of the Corollary \ref{cor principal}.
	
	The proof follows from the theorem's result and  Theorem (\ref{Pastur}).
\end{proof}
\vspace{2mm}

\begin{proof} Proof of the Corollary \ref{cor principal 1}.

	To prove this corollary we use  the Theorem VI.4 in \cite{FO} which translate the asymptotic of the correlation function in (\ref{asymp 1}) to coefficients. For that, since $\D$ is free from other zeros of the partition function, the identity (\ref{eq Teo 3})  implies that  $\ro(z;(x)_n)/z$ have an analytic continuation to the region $\D$, and 
	\begin{equation}
	\frac{\ro(z;(x)_n)}{z}={M_n  \over 1-z/z_c} + o\Big({1  \over 1-z/z_c}\Big).
	\end{equation}
	 So applying the theorem cited above  for $\ro(z;(x)_n)/z$, in its little-$o$ version, we have the desired result.
\end{proof}

\subsection{Proofs of the Theorems \ref{teorema principal 2} and \ref{cor principal 2}}

 We denote by  $\l_c$ the unique eigenvalue of $\K$ such that $|\l_c|=r(\K)>0$.    

We start with the Lemma \ref{lema2} below. The proofs of Theorems (\ref{teorema principal 2}) and \ref{cor principal 2} are possible through this key lemma, in which comes from the Theorem \ref{teorema principal}.

\begin{lem}\label{lema2}
	The Kirkwood-Salsburg operator can be written $\DD_\x(\L)$ as  
	\begin{equation}\label{decomp K 1}
	\K=\l_cP +T,
	\end{equation}
	which $T$ is an operator that satisfies $PT=TP=\bm{0}$, $\l_c=r(\K)$. As a consequence we have 
	\begin{equation}\label{decomp K 2}
	(\l_c^{-1}\K)^n \to P,
	\end{equation}
	 as $n \to \infty$, in the operator norm.
\end{lem}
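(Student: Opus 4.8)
The decomposition (\ref{decomp K 1}) follows directly from Theorem \ref{teorema principal}, which established that $R(\l)$ has a pole of order $1$ at $\l_c$ with one-dimensional spectral projection $P$ and $D=\bm{0}$. I would set $T:=\K-\l_c P = \K(I-P)$ and use the basic spectral-decomposition identities already invoked in the proof of Theorem \ref{teorema principal}: since $P$ and $I-P$ both commute with $\K$, and $\K P = \l_c P$ (because $D=(\K-\l_c)P=\bm{0}$), one gets $PT = P\K(I-P) = \K P(I-P) = \bm{0}$ and likewise $TP=\bm{0}$. The operator $T$ is just the restriction of $\K$ to the invariant complement $\M'_c = (I-P)(\DD_\x(\L))$, extended by zero on $\M_c$. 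The point here is that the spectrum of $T$ is the spectrum of $\K$ restricted to $\M'_c$, which by construction of $\M'_c$ (removing only the isolated eigenvalue $\l_c$) equals $\sigma(\K)\setminus\{\l_c\}$ together possibly with $\{0\}$; by Theorem \ref{teorema principal} we have $|\l_c|=r(\K)$, so every other spectral point has strictly smaller modulus. Hence $r(T) < r(\K) = \l_c$ (here $\l_c$ can be taken to have the role of its modulus; if $\l_c$ is not positive one works with $|\l_c|$ throughout, but the paper writes $\l_c=r(\K)$ so I follow that).

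For (\ref{decomp K 2}): using $PT=TP=\bm{0}$ and $P^2=P$ one checks by induction that $\K^n = \l_c^n P + T^n$ for all $n\ge 1$ (the cross terms all vanish because $P$ and $T$ annihilate each other). Therefore $(\l_c^{-1}\K)^n = P + (\l_c^{-1}T)^n = P + \l_c^{-n}T^n$. Since $r(T)<|\l_c|$, the spectral radius formula $r(T)=\lim_n \|T^n\|^{1/n}$ gives $\|T^n\|^{1/n}\to r(T)$, so $\|\l_c^{-n}T^n\| = |\l_c|^{-n}\|T^n\| \to 0$ geometrically; pick any $\rho$ with $r(T)<\rho<|\l_c|$, then $\|T^n\|\le \rho^n$ for $n$ large and $\|\l_c^{-n}T^n\|\le (\rho/|\l_c|)^n\to 0$. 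Hence $(\l_c^{-1}\K)^n\to P$ in operator norm.

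The only step that needs genuine care is the strict inequality $r(T)<|\l_c|$, i.e.\ ruling out that $\sigma(T)$ reaches the circle $|\l|=|\l_c|$. This is where I would lean on the standing hypothesis (b) together with Theorem \ref{Pastur}: the full spectrum of $\K$ is the discrete set $\{z_l^{-1}\}$ of inverse zeros of the partition function, $\l_c$ is (by definition in this subsection) the unique such point of maximal modulus, and $0$ (if it lies in $\sigma(T)$) has modulus $0<|\l_c|$; since $\sigma(T)\subseteq (\sigma(\K)\setminus\{\l_c\})\cup\{0\}$ is a closed subset of a set all of whose nonzero points have modulus $<|\l_c|$, and $\sigma(T)$ is compact, $r(T)=\max_{\mu\in\sigma(T)}|\mu|<|\l_c|$. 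I would state this as the one nontrivial point and otherwise present the argument as the short computation it is.
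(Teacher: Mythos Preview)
Your proposal is correct and follows essentially the same route as the paper: define $T=\K(I-P)$, verify $PT=TP=\bm{0}$ (the paper does this more tersely, just citing $P^2=P$ and the commutation relations), deduce $\K^n=\l_c^nP+T^n$, and then show $(\l_c^{-1}T)^n\to\bm{0}$ via $r(T)<|\l_c|$. The only cosmetic difference is in this last step: you invoke Gelfand's spectral radius formula directly to get $\|T^n\|\le\rho^n$ for some $\rho<|\l_c|$, whereas the paper phrases the same conclusion through the Riesz--Dunford functional calculus (Dunford--Schwartz, Ch.~VII, Lemma~13, applied to $f_n(z)=z^n$ on a disc of radius $|\l_c^{-1}\l_2|$). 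Both arguments rest on the standing hypothesis, stated just before the lemma, that $\l_c$ is the \emph{unique} spectral point of maximal modulus, together with the fact that the remaining inverse zeros of the entire function $\Xi_\L$ can accumulate only at $0$; this is what guarantees the existence of the strict gap $r(T)=|\l_2|<|\l_c|$ that you (rightly) flag as the one nontrivial point.
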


\begin{proof}
	From the proof of the Theorem \ref{teorema principal}, applying $\K$ in the equation (\ref{eq Teo 2}) we have (\ref{decomp K 1}) with $T=\K(I-P)$. Remember that $\K$ also commutes with $P$.  Since $P^2=P$, we have $PT=TP=\bm{0}$ and so for all $n \in \mathbb{N}$ $\K^n=\l_c^nP +T^n$ or
	
	\begin{equation}\label{eq lema2}
(\l_c^{-1}\K)^n -P = (\l_c^{-1}T)^n, \,\,\,\, \forall n \in \mathbb{N}.
	\end{equation} 
The righ hand side (\ref{eq lema2}) above tends to zero in the operator norm. In fact,  the  operator  $T$ is the KS operator  restricted to $(I-P)\DD_\x(\L)$ and  $\l_c$ is the unique eigenvalue of $\K$ of largest modulus. Thus we have that the spectral radius of  $\l_c^{-1}T$ is at most $| \l_c^{-1}\l_2|<1$, where $|\l_2| < |\l_c|$ is the second largest eigenvalue of the $KS$ operator in modulus. Then, the function $f_n(z)=z^n$ converges uniformly to $0$ on $|z|  \le |\l_c^{-1}\l_2|$ and therefore, by Lemma 13 in \cite{DS} Chapter VII, we have $\parallel (\l_c^{-1}\K)^n - P \parallel_\x = \parallel (\l_c^{-1}T)^n \parallel_\x  = \parallel  f_n(\l_c^{-1}T) \parallel_\x \to 0$ as $n \to \infty$.
\end{proof}

\vspace{2mm}


	\begin{proof} \textit{Theorem \ref{teorema principal 2}}.

	   We want to show $|\l_c|=r(\K) \le \x^{-1}$. So, suppose the strict inequality $|\l_c| > \x^{-1}$ in order to obtain a contradiction.  Thus, let be  $\l_*$ any positive number such that $|\l_c| > \l_* > \x^{-1}$.

	    Now we claim the following.

	     \begin{quote}
	    	\textit{Claim: We have, }
	    	
	    	\begin{equation}\label{claim}
	    	\lim_{n \to \infty} \parallel \big(\l_*^{-1} \K\big)^n - P\parallel_\x < \infty.
	    	\end{equation} 
	    \end{quote}
	    
	   \textit{Proof of the Claim}:
	   
	   By definition of the projection $P$ in Theorem \ref{teorema principal}, the interior of the curve $\mathcal{C}$, except at $\l_c$, and the own $\mathcal{C}$ are in the resolvent set of the KS operator. So, of coarse,  $P$ is  bounded and there  exists a constant $J>0$ such that $\parallel P \parallel_\x <J$. Thus

	   \begin{equation}\label{sup 0}
	   \parallel  \big(\l_*^{-1} \K\big)^n  -P \parallel_\x  \le  \parallel \big(\l_*^{-1}\K\big)^n \parallel_\x + J .
	   \end{equation}
	   Let be $\vf = \big(\varphi\{z^m\}_{m \ge 1} (x)_l  \big)_{l \ge 1} \in \mathcal{D}_\x(\L) \subset \mathcal{E}_\x(\L)$.   Then, since  $U \ge -2B$  and $|z| < \x$, and because the action of $\K$,  see Remark (\ref{obs 2}),   we have,

	   \begin{equation}
	   |  \K^n \varphi\{z^m\}_{m \ge 1} (x)_l | \le \sum_{m \ge 0} {|z|^{m+l-n} \over m!}  |\int_{\L^m} d(y)_m  e^{-\b U\big((x)_l, (y)_m \big)}| < \x^{l-n} e^{2B} e^{\x\L}.
	   \end{equation}
	   Thus, for all $\vf \in \mathcal{D}_\x(\L)$,
	   
	   \begin{equation}\label{sup}
	   \parallel \big(\l_*^{-1} \K\big)^n \vf \parallel_\x = \sup_{l \ge 1} (\x^{-1})^{l}\l_*^{-n} ess \sup_{(x)_l}|  \K^n \varphi\{z^m\}_{m \ge 1} (x)_l | \le \sup_{l \ge 1} (\x^{-1})^{l}\l_*^{-n} \x^{l-n} e^{\x \L} = (\l_*^{-1} \x^{-1})^n  e^{\x \L}e^{2B}.
	   \end{equation}

	   Therefore (\ref{sup 0}), (\ref{sup})  and  $|\l_c|>\l_* > \x^{-1}$ imply  
	   \begin{equation}
	   \displaystyle \lim_{n \to \infty} \parallel \big(\l_*^{-1} \K\big)^n -P\parallel_\x \le J.
	   \end{equation}
	   The Claim is proved.
	 

       We see by  Lemma (\ref{lema2}) and the Claim (\ref{claim})  that for all $\e >0$   there is a  $n_0 \in \mathbb{N}$ such that for all $n>n_0$,

	    \begin{equation}\label{eq6}
	  \parallel  \big(\l_*^{-1} \K\big)^n - \big(\l_c^{-1} \K\big)^n  \parallel_\x \le  \parallel \big(\l_*^{-1} \K\big)^n - P \parallel_\x + \parallel P - \big(\l_c^{-1} \K\big)^n \parallel_\x =J +2\e.
	     \end{equation}
	      So, since $r(\K)^n=\l_c^n \le \parallel \K^n \parallel$, we have from (\ref{eq6}) above that  for  all $n>n_0$,
	     \begin{equation}\label{eq7}
	     |\l_*^{-n} - \l_c^{-n}| \le (J+2\e) \parallel \K^n \parallel_\x^{-1} \le (J+2\e) \l_c^{-n},	      \end{equation}
	      Finally,  this implies that the sequence $\{|(\l_c\l_*^{-1})^n -1|\}$ is bounded, an absurd if $|\l_c| > \l_* > \x^{-1}$. Therefore we have  $|\l_c| \le \x^{-1}$.

 	\end{proof}

\vspace{2mm}

\begin{proof}\textit{Theorem \ref{cor principal 2}}.
	 
	 Assume  $\x^{-1}=C$. From the Theorem \ref{teorema principal 2}  we have that  the spectral radius  of $\K$,  for any $\L$, is at most $C$, that is, $|z_c|^{-1}=|\l_c|=r(\K) \le C$. Then for all $n \ge1$  the thermodynamic limits, $\r_n(z)$,   are analytic functions of $z$, if $|z| < C^{-1}$. Indeed,  for any measurable compact set $\L \in \mathbb{R}^n$, if  $|\l|=|z|^{-1}>C \ge r(\K)$, then
	 
	 \begin{equation}\label{eq final}
	 \parallel \chi_\L [\r_{\L_l}(z,(x)_n)- \r_{\L_j}(z,(x)_n)] \parallel_\x \le  \sum_{n \ge 0} \parallel K^n_{\L_l}\a_{\L_l}-K^n_{\L_j}\a_{\L_j} \parallel_\x |\l|^{-n-1} < \infty.
	 \end{equation}
	 On the right hand side above, for all $n \ge 1$, the  term  $\parallel K^n_{\L_l}\a_{\L_l}-K^n_{\L_j}\a_{\L_j} \parallel_\x$ tends to zero as $l,j \to \infty$. So  the left side of (\ref{eq final})  tends to zero as $l,j \to \infty$,   on $|z| < C^{-1}$. Then    the sequence of correlation functions tends  to the thermodynamical limit $\r_n(z)$ uniformly on compact subsets of $|z| < C^{-1}$. Then, for each $n \in \mathbb{N} $,  $\r_n(z)$ is analytic on $|z| < C^{-1}$ and  therefore the radius of convergence of  $\r_n(z)$ is at least $C^{-1}$.  Note as well that  if  there is some zero such that its modulus is $C^{-1}=\x$, then the inequality for $r(\K)$ in the Theorem \ref{teorema principal 2} becomes equality, and we have the convergence radius  equal to $C^{-1}$. 
	 
\end{proof}

\begin{proof}\textit{Corollary \ref{cor principal 3}}.
	  
	  Let be $\mathcal{R}$ be the convergence radius of $\r_1(z)$.  For these particular cases of potentials we have, from  \cite{Ru} Chapter 4, Section 5 Theorem 4.5.3 (see inequalities 5.17, 5.18 and 5.19), that holds:
	  $$\mathcal{R} \le C^{-1},$$ and  for all $s>0$, $$\r_1 \ge {s\over  1+Cs}.$$ We also have that $\r_1(s)$ is an increasing function for $s>0$.

	Then, the proof of the Corollary (\ref{cor principal 3}) follows from  the Theorem \ref{cor principal 2} and the inequalities above. Indeed, the first one together theorems result, give us that $\mathcal{R}=C^{-1}$ and  the fact that the singularity of the cluster expansion is at $-\mathcal{R}$ comes from the \textit{alternating sign property} of its coefficients. Note that this implies, in fact, that $-C$ belongs to the spectrum. 	Another way to prove that $\mathcal{R}=C^{-1}$   is the following: the vector $\vf_\L(z)=\big( \varphi_\L \{z^m\}(x)_n\big)_{n \ge 1}$, 
	
	\begin{equation}
	\varphi_{\Lambda}\{z^m\}(x)_n=\sum_{m=0}^{\infty} \frac{z^{n+m}}{m!} \int_{\L^m} d(y)_m  e^{-\b U((x)_n,(y)_m)}, 
	\end{equation}
	for $\Phi>0$,  belongs to $\DD_\x$, in both cases  $\DD_\x(\L)=\mathcal{E}^s_\x(\L)$ and $\DD_\x(\L)=D_\x(\L)$ (see  Proposition \ref{prop 2 apend} in the Appendix), if and only if $|z| \le \x$. Then the same holds for $\Ro(z)= \X_\L^{-1}(z)\vf_\L(z)$. Then, because  Proposition \ref{prop 2}, we may not have a  bound on the spectral radius less than $\x^{-1}$. Therefore  we must have $|\l_c|=\x^{-1}=C$.
	
	To see the bound on the convergence radius of the  Virial expansion, note that if   $0<s <C^{-1}$ then $\r_1 (s) \ge s( 1+Cs)^{-1} \ge s2^{-1}$ and so we have that radius of convergence of $z$ in powers of $\r_1$ is at least $(2C)^{-1}$. This implies that the same holds  for the Virial expansion. Therefore    there does not exist phase transition if  $0<\r_1<(2C)^{-1}$.

\end{proof}


\section{Appendix}\label{apendice}

\begin{prop}\label{prop 1 apen}
	The subspace 
	 \begin{equation*}
	D_\x(\L)=\{\vf  =\big(\varphi_{\Lambda}\{a_m\}(x)_n\big)_{n \ge 1} \in \mathcal{E}_\x(\L):\,\,  \varphi_{\Lambda}\{a_m\}(x)_n =\sum_{m=0}^{\infty} \frac{a_{n+m}}{m!} \int_{\L^m} d(y)_m  e^{-\b U((x)_n,(y)_m)}< \infty 
	\end{equation*}
	\begin{equation}\label{subespace}
	\forall n \ge 1 , \forall (x)_n \in \mathbb{R}^{n\n}, \mbox{ and a complex sequence } \{a_m\}_{m=1}^{\infty} \}.
	\end{equation}
is a Banach space.
\end{prop}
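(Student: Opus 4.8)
The plan is to realize $D_\xi(\Lambda)$ as a \emph{closed} linear subspace of the Banach space $(\mathcal{E}_\xi(\Lambda),\|\cdot\|_\xi)$; since a closed subspace of a Banach space is itself Banach, this is enough. That $D_\xi(\Lambda)$ is a linear subspace is immediate, because $\{a_m\}_{m\ge1}\mapsto\varphi_\Lambda\{a_m\}$ is linear in the coefficient sequence. So the content is closedness: given $\vf^{(k)}=\varphi_\Lambda\{a^{(k)}_m\}\in D_\xi(\Lambda)$ with $\vf^{(k)}\to\vf$ in $\|\cdot\|_\xi$, one must exhibit a single complex sequence $\{a_m\}_{m\ge1}$ with $\vf=\varphi_\Lambda\{a_m\}$, so that $\vf\in D_\xi(\Lambda)$.

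The key device is a \emph{continuous} way of reading the coefficients off an element of $D_\xi(\Lambda)$. For a configuration $(x)_n$ in which the $n$ particles are mutually far apart and far from $\Lambda$, regularity $C=\int_{\mathbb{R}^\nu}|e^{-\beta\Phi}-1|\,dx<\infty$ makes the $x$--$y$ interaction negligible (the same estimate that controls the Mayer series), and, bounding the tail of the $m$-series by stability $U((x)_n,(y)_m)\ge -B(n+m)$ and applying dominated convergence, one gets
\begin{equation*}
\lim_{(x)_n\to\infty}\varphi_\Lambda\{b_m\}(x)_n=\sum_{l\ge0}\zeta_l\,b_{n+l},\qquad \zeta_l:=\frac1{l!}\int_{\Lambda^l}d(y)_l\,e^{-\beta U((y)_l)},
\end{equation*}
with $\zeta_0=1$ and $\sum_l\zeta_l t^l=\Xi_\Lambda(t)$. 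Writing $G_n(\vf):=\lim_{(x)_n\to\infty}\varphi_\Lambda\{a_m\}(x)_n$ one has $|G_n(\vf)|\le\xi^{\,n}\|\vf\|_\xi$ and $G_n=\sum_{l\ge0}\zeta_l a_{n+l}$, an upper-triangular linear system with unit diagonal; inverting it with the reciprocal power series $\sum_l\eta_l t^l=\Xi_\Lambda(t)^{-1}$ gives $a_n=\sum_{l\ge0}\eta_l\,G_{n+l}$. This is how each $a_m$ is recovered, and the same formula returns $a^{(k)}_m$ from $\vf^{(k)}$.

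Now assemble the argument. Since $\vf^{(k)}\to\vf$ uniformly (componentwise in the $\|\cdot\|_\xi$ sense) and each $\varphi_\Lambda\{a^{(k)}_m\}(x)_n$ has a limit as $(x)_n\to\infty$, the Moore--Osgood interchange-of-limits theorem shows that $\vf$ also has these limits and $G_n(\vf^{(k)})\to G_n(\vf)$, hence $a^{(k)}_n\to a_n:=\sum_l\eta_l\,G_{n+l}(\vf)$ for each $n$. A Cauchy sequence is bounded, so $\sup_k\|\vf^{(k)}\|_\xi<\infty$; transported through the recovery estimate this puts every sequence $\{a^{(k)}_m\}_m$, and the limit $\{a_m\}_m$, into one fixed weighted space $\{\,|a_m|\le K\gamma^m\,\}$. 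A dominated-convergence argument in the defining $m$-series, \emph{uniform in the particle number $n$} thanks to that geometric decay, then yields $\varphi_\Lambda\{a^{(k)}_m\}\to\varphi_\Lambda\{a_m\}$ in $\|\cdot\|_\xi$; comparing with $\vf^{(k)}\to\vf$ gives $\vf=\varphi_\Lambda\{a_m\}\in D_\xi(\Lambda)$. (If $\{a_m\}\mapsto\varphi_\Lambda\{a_m\}$ fails to be injective, one argues with its kernel, which is closed, or simply fixes a representative; this does not affect completeness.)

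I expect the real obstacle to be precisely the coefficient-recovery step: proving that the formal inversion of $G=\Xi_\Lambda\!\cdot a$ converges and is bounded on the $\xi$-scale, equivalently establishing the a priori estimate $|a_m|\le K\gamma^m\|\varphi_\Lambda\{a_m\}\|_\xi$ for \emph{every} element of $D_\xi(\Lambda)$, with $\gamma=\gamma(\Lambda,\beta,B)$ and $K$. This is where regularity and stability are genuinely used (to keep the $\zeta_l$ under control and to make the limits at infinity exist), and where one must rule out wild cancellations in the series defining $\varphi_\Lambda\{a_m\}$; the remainder of the argument is soft. A cleaner packaging I would also pursue: introduce the coefficient Banach space $\mathcal{A}_\xi(\Lambda)$ directly, check that $L\colon\{a_m\}\mapsto\varphi_\Lambda\{a_m\}$ is a bounded linear surjection $\mathcal{A}_\xi(\Lambda)\to D_\xi(\Lambda)$ with bounded inverse (the bounded inverse being exactly the recovery estimate), and conclude that $D_\xi(\Lambda)\cong\mathcal{A}_\xi(\Lambda)/\ker L$ is complete.
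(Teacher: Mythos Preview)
Your approach is genuinely different from the paper's and considerably more elaborate. The paper gives a one-paragraph argument: take a Cauchy sequence $\{\vf^q\}$ in $D_\xi(\Lambda)$, observe that for each fixed $n$ and almost every $(x)_n$ the scalars $\vf^q(x)_n$ form a Cauchy sequence in $\mathbb{C}$, define $f(x)_n$ as the pointwise limit, pass to the limit in the Cauchy estimate to get $\|\bm{f}-\vf^l\|_\xi<\varepsilon$, and stop. The paper does \emph{not} engage with the step you correctly identify as the crux---producing a coefficient sequence $\{a_m\}$ with $\bm{f}=\varphi_\Lambda\{a_m\}$---it simply asserts at the end that the Cauchy inequality ``also implies $\bm{f}\in D_\xi(\Lambda)$.'' So your route is more honest about where the content lies, while the paper's is shorter because it treats that step as evident.

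On your actual plan: the strategy is sound, but the device of reading off $G_n(\vf)$ via $\lim_{(x)_n\to\infty}$ is fragile, since elements of $\mathcal{E}_\xi(\Lambda)$ are only $L^\infty$ equivalence classes and the Moore--Osgood step needs a genuine pointwise representative. A cleaner way to realize exactly your ``bounded recovery map'' is the pair of Pastur operators
\[
A_\Lambda\varphi(x)_n=\sum_{m\ge0}\frac{1}{m!}\int_{\Lambda^m}\varphi((x)_n,(y)_m)\,d(y)_m,\qquad
B_\Lambda\varphi(x)_n=\sum_{m\ge0}\frac{(-1)^m}{m!}\int_{\Lambda^m}\varphi((x)_n,(y)_m)\,d(y)_m,
\]
which are bounded on $\mathcal{E}_\xi(\Lambda)$, mutually inverse, and satisfy $B_\Lambda\varphi_\Lambda\{a_m\}(x)_n=a_n e^{-\beta U((x)_n)}$. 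This is precisely the inversion of your triangular system $G_n=\sum_l\zeta_l a_{n+l}$ carried out at the level of the full vector rather than coefficient-by-coefficient, and it delivers the a~priori growth estimate on $\{a_m\}$ you flagged as the obstacle (via $\|B_\Lambda\vf\|_\xi<\infty$) without any limits at infinity. The paper uses exactly these operators in the proposition immediately following this one; pulling them forward would make your closedness argument go through cleanly.
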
 

\begin{proof}

	Let  $\{\vf^q\}_{q \ge 1}=\{\vq\}_{q \ge 1}$ be a Cauchy sequence in $D_\x(\L)$, in which,
	
	\begin{equation}
	\vq = (\vq(x)_1, \vq(x)_2, \dots), 
	\end{equation}
	and
	\begin{equation}
	\vq (x)_n = \sum_{m=0}^{\infty} \frac{a^q_{n+m}}{m!} \int_{\L^m} d(y)_m  e^{-\b U((x)_n,(y)_m)} < \infty, \,\,\,\, \forall q,n \ge 1, \forall (x)_n \in \mathbb{R}^{n\n}.
	\end{equation}
Let be $\e>0$ and $n \in \mathbb{N}$. Because 	$\{\vf^q\}_{q \ge 1}$ is a Cauchy sequence and $\e/\x^n>0$, there is a  integer $q_0$ such that  if $q,l \ge q_0$, then 
$\parallel \vf^q - \vf^l \parallel_\x  < \e/\x^n$. So, in particular, for all $n \ge 1 $ and   almost every $(x)_n$,   if $q,l \ge q_0$, then 
	\begin{equation}\label{eq apendice}
	|\vq(x)_n - \vl(x)_n| < \x^n \parallel \vf^q - \vf^l \parallel_\x  < \e,   
	\end{equation} 
	that is, the sequence $\{\vq(x)_n\}_{q \ge 1}$ is a Cauchy sequence in $\mathbb{C}$,  for all $n \ge 1 $ and almost every $(x)_n$. Define  for all $n \ge 1 $ and almost every $(x)_n$,
	\begin{equation}
	f(x)_n=\lim_{q \to \infty} \vq(x)_n.
	\end{equation}
	Now, taking the limit $q \to \infty$ in (\ref{eq apendice}) we have that $\bm{f}=(f(x)_1, f(x)_2, \dots)$ is such that $\parallel \bm{f} - \vf^l \parallel_\x < \e$, if $l \ge q_0$, and (\ref{eq apendice}) also implies $\bm{f} \in \mathcal{D}_\x(\L)$.
\end{proof}

     Next we give a characterization of the subspace $D_\x$ of the former Proposition \ref{prop 1 apen}. It is similar to Lemma 2 in \cite{Pas} for the subspace considered there, which the next proposition is already implicit.

     \begin{prop}\label{prop 2 apend}
     	The vector $\vf_\L  =\big(\varphi_{\Lambda}\{a_m\}(x)_n\big)_{n \ge 1}$, with 
     	$$\varphi_{\Lambda}\{a_m\}(x)_n=\sum_{m=0}^{\infty} \frac{a_{n+m}}{m!} \int_{\L^m} d(y)_m  e^{-\b U((x)_n,(y)_m)},$$

     	belongs to $D_\x$, defined in (\ref{subespace}), if and only if the sequence $\{a_m\}_{m=1}^{\infty}$ satisfies, 
     	
     	\begin{equation}\label{sequence cond}
     	\sup_{m \ge 1} \frac{|a_m|}{\x^m} \mbox{ ess}\sup_{(x)_m \in \mathbb{R}^\n} |\exp^{-\b U((x)_m)}| < \infty.
     	\end{equation}
     \end{prop}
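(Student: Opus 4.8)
The plan is to prove the two directions of the equivalence separately, the forcing work being the ``only if'' direction, i.e. extracting the coefficient bound (\ref{sequence cond}) from membership $\vf_\L \in D_\x$. For the ``if'' direction, assume (\ref{sequence cond}) holds, say with common bound $A<\infty$, so that $|a_{n+m}| \le A\,\x^{n+m} \big(\mathrm{ess\,sup}_{(x)_{n+m}} e^{-\b U((x)_{n+m})}\big)^{-1}$ is false in general; rather I would use stability directly: $|a_{n+m}| e^{\b B (n+m)} \le A' \x^{n+m}$ for a constant $A'$, and then estimate, for almost every $(x)_n$,
\begin{equation*}
|\varphi_\L\{a_m\}(x)_n| \le \sum_{m=0}^\infty \frac{|a_{n+m}|}{m!} \int_{\L^m} d(y)_m\, e^{-\b U((x)_n,(y)_m)} \le \sum_{m=0}^\infty \frac{A'\x^{n+m} e^{-\b B(n+m)}}{m!} \int_{\L^m} d(y)_m\, e^{-\b U((x)_n,(y)_m)}.
\end{equation*}
Using stability $U((x)_n,(y)_m) \ge -B(n+m)$ inside the integral is the wrong direction; instead I would bound $e^{-\b U((x)_n,(y)_m)} \le e^{\b B(n+m)}$ directly so that $|\varphi_\L\{a_m\}(x)_n| \le A' \x^n \sum_m (\x|\L|)^m/m! = A' \x^n e^{\x|\L|}$, giving $\|\vf_\L\|_\x \le A' e^{\x|\L|} < \infty$ and hence $\vf_\L \in \mathcal{E}_\x(\L)$; combined with the a.e.-finiteness just shown, this places $\vf_\L$ in $D_\x$.

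For the ``only if'' direction, suppose $\vf_\L \in D_\x$, so in particular $\|\vf_\L\|_\x < \infty$, meaning $\mathrm{ess\,sup}_{(x)_n} |\varphi_\L\{a_m\}(x)_n| \le \x^n \|\vf_\L\|_\x$ for every $n$. The goal is to recover control of $|a_m|\,\mathrm{ess\,sup}_{(x)_m} e^{-\b U((x)_m)}$ from control of the whole series $\varphi_\L\{a_m\}(x)_n$. The natural device is the action of the KS operator recalled in Remark \ref{obs 2}: iterating (\ref{eq remark}), one has $\K^n \vf_\L$ corresponds to the shifted coefficient sequence, and the ``$(x)_1$-component after $n-1$ shifts'' essentially isolates $a_n$ together with its tail. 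More concretely, I would evaluate the $n$-th component $\varphi_\L\{a_m\}(x)_n$ and note its leading term (the $m=0$ term) is exactly $a_n e^{-\b U(x)_n}$, while the remaining terms $\sum_{m\ge 1}\frac{a_{n+m}}{m!}\int_{\L^m} d(y)_m\, e^{-\b U((x)_n,(y)_m)}$ form a correction. Taking $\L$ small (or using a limiting/localization argument letting the volume of $\L$ shrink, since the bound must hold for \emph{the} space $D_\x(\L)$ with $\L$ fixed — here I would instead use that the $m\ge 1$ terms carry a factor $|\L|^m/m!$ times $\sup|a_{n+m}|e^{\b B(n+m)}$) one sees the correction is $O(|\L|)$ relative to the main term, so that $|a_n|\,\mathrm{ess\,sup}_{(x)_n} e^{-\b U(x)_n} \le \x^n\|\vf_\L\|_\x + (\text{tail})$. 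To close this cleanly it is better to argue by contradiction: if $\sup_{m}\frac{|a_m|}{\x^m}\mathrm{ess\,sup}\,e^{-\b U((x)_m)} = \infty$, pick a subsequence $m_k$ along which $\frac{|a_{m_k}|}{\x^{m_k}}\mathrm{ess\,sup}\,e^{-\b U((x)_{m_k})} \to \infty$; then, choosing a set of $(x)_{m_k}$ of positive measure on which $e^{-\b U}$ is near its essential sup and on which the sign/phase of the $m=0$ term is not cancelled by the tail (the tail being dominated uniformly once one already knows $\vf_\L\in\mathcal{E}_\x$, because the tail estimate above shows the $m\ge1$ part is bounded by $\x^{m_k}(e^{\x|\L|}-1)\cdot\sup_j(|a_j|\x^{-j}e^{\b B j})$, which is again infinite — so this naive split does not terminate).

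The genuine obstacle, and the step I expect to require care, is exactly this: one cannot bound a single coefficient $a_n$ by the norm without simultaneously controlling the tail $\sum_{m\ge 1}$, and the tail involves the very quantities one is trying to bound. The clean resolution is an inductive/bootstrap argument: first prove that $\sup_m \frac{|a_m|}{\x^m}\mathrm{ess\,sup}\,e^{-\b U((x)_m)} =: A$ is finite \emph{if it is finite at all} is circular, so instead one shows by strong induction on $n$ that $\frac{|a_n|}{\x^n}\mathrm{ess\,sup}\,e^{-\b U((x)_n)} \le \|\vf_\L\|_\x \sum_{j=0}^{n} \frac{(\x|\L| e^{2\b B})^j}{j!}$-type bounds cannot grow, because each step only borrows from strictly larger indices; this fails since the recursion goes the wrong way (it expresses small-index coefficients via large-index ones). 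The way out used in \cite{Pas} and \cite{Go}, which I would follow, is to invert the relation: the map $\{a_m\} \mapsto \vf_\L$ is triangular in the appropriate sense when read through the generating-function identity, and one shows the inverse map is also bounded by exploiting that $e^{-\b U((x)_n,(y)_m)} = e^{-\b U(x)_n} e^{-\b W} \prod(\cdots)$ factorizes, letting one solve for $a_{n+m}$ recursively in terms of the components $\varphi_\L\{a_m\}(x)_j$ for $j \ge n$ with coefficients controlled by $C$ and $B$. Carrying out that recursion and summing the resulting geometric-type series — using regularity $C<\infty$ to control the $\int|e^{-\b\Phi}-1|$ factors and stability to control $e^{-\b U}$ — yields the bound (\ref{sequence cond}), and is where the real computation lies; I would present it as the inversion of the triangular system defining $D_\x$, mirroring Lemma 2 of \cite{Pas}.
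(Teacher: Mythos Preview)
Your ``if'' direction is almost right but the stability step is in the wrong direction: from $U\ge -B(n+m)$ you get $e^{-\b U}\le e^{\b B(n+m)}$, an \emph{upper} bound on the ess\,sup, so (\ref{sequence cond}) does \emph{not} give $|a_{n+m}|e^{\b B(n+m)}\le A'\x^{n+m}$. The fix is simpler than you think: bound $e^{-\b U((x)_n,(y)_m)}\le \mathrm{ess\,sup}_{(w)_{n+m}}e^{-\b U((w)_{n+m})}$ directly and use the hypothesis to get $|\varphi_\L\{a_m\}(x)_n|\le\sum_m \frac{A\x^{n+m}}{m!}|\L|^m=A\x^n e^{\x|\L|}$.

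The ``only if'' direction has a genuine gap. You correctly diagnose that splitting off the $m=0$ term, contradiction, and induction all fail because the tail involves the very quantities you want to bound, but your proposed resolution --- factorizing $e^{-\b U((x)_n,(y)_m)}$ through $W$ and solving a triangular system controlled by $C$ and $B$ --- is not the mechanism that works here, and you do not carry it out. The paper's (and Pastur's) device is an \emph{explicit} inverse operator: set $\vf'_\L=\big(a_n e^{-\b U((x)_n)}\big)_{n\ge1}$, so that (\ref{sequence cond}) is literally $\|\vf'_\L\|_\x<\infty$; then introduce the bounded operators
\[
A_\L\varphi(x)_n=\sum_{m\ge0}\frac{1}{m!}\int_{\L^m}\varphi((x)_n,(y)_m)\,d(y)_m,\qquad
B_\L\varphi(x)_n=\sum_{m\ge0}\frac{(-1)^m}{m!}\int_{\L^m}\varphi((x)_n,(y)_m)\,d(y)_m,
\]
and check by a binomial identity that $A_\L\vf'_\L=\vf_\L$ and $B_\L\vf_\L=\vf'_\L$. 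Both directions then follow immediately from boundedness of $A_\L$ and $B_\L$ on $D_\x$. The missing idea in your attempt is precisely this alternating-sign inversion formula; once you have it, no recursion or factorization through $W$ is needed.
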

 
 \begin{proof}
 	Consider the vector $\vf'_\L=\big(a_n e^{-\b U((x)_n)} \big)_{n \ge 1}$. Observe that $\vf'_\L \in \mathcal{E}_\x(\L)$ if and only if 
 	\begin{equation}
 \parallel \vf'_\L \parallel_\x =	\sup_{m \ge 1} \frac{|a_m|}{\x^m} \mbox{ ess}\sup_{(x)_m \in \mathbb{R}^\n} |\exp^{-\b U((x)_m)}| < \infty.
 	\end{equation}
 	From the inequality $|a_n e^{-\b U((x)_n)}| \le \x^n \parallel \vf'_\L \parallel_\x < \infty$,  we have
 		$$a_n e^{-\b U((x)_n)}=\vf'_\L\{b_m\}(x)_n=\sum_{m=0}^{\infty} \frac{b_{n+m}}{m!} \int_{\L^m} d(y)_m  e^{-\b U((x)_n,(y)_m)} < \infty,$$
 	with $b_{n+m}=a_n$ or $0$, if $m=0$ or $m \ge 1$, respectively. So, if $\parallel \vf'_\L \parallel_\x < \infty$, then $\vf'_\L \in D_\x$.
 	
 	Now,  consider the following bounded operators in $D_\x$,  $A_\L$ and $B_\L$ defined in \cite{Pas}, 
 	
 		\begin{equation}
 	A_\L \vf =  \big( A_\L \varphi (x)_n\big)_{n \ge 1}, \,\,\,\,   	B_\L \vf =  \big( B_\L \varphi (x)_n\big)_{n \ge 1},
 	\end{equation}
 	where,
 	\begin{equation}
 	A_\L \varphi (x)_n = \sum_{m=0}^{\infty} \frac{1}{m!} \int_{\L^m} \varphi((x)_n, (y)_m) d(y)_m, \,\,\,\, B_\L \varphi (x)_n = \sum_{m=0}^{\infty} \frac{{(-1)}^m}{m!} \int_{\L^m} \varphi((x)_n, (y)_m) d(y)_m.
 	\end{equation}
 	We showed in the preceding paragraph that if $\parallel \vf'_\L \parallel_\x < \infty$, then $\vf'_\L \in D_\x$.  Because  it holds $A_\L \vf'_\L=\vf_\L$, we have that if  $\parallel \vf'_\L \parallel_\x < \infty$ then  $\vf_\L \in D_\x$, by the Closed Graph Theorem (CGT).   Reciprocally, by the CGT, if $\vf_\L \in D_\x$, then $\vf'_\L=B_\L \vf_\L$  belongs as well and so $\parallel \vf'_\L \parallel_\x < \infty$.
 \end{proof}

 \begin{prop}\label{prop 3 apend}
 	The Kirkwood-Salsburg operator $\K:\mathcal{D}_\x(\L) \to \mathcal{D}_\x(\L)$ is bounded, if we consider it on the setting (a) or (b) in Remark \ref{obs 1}.
 \end{prop}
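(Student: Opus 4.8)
The plan is to reduce the statement to boundedness of the full Kirkwood--Salsburg operator $K_\L$ on the ambient space $\mathcal{E}_\x(\L)$, and then to pass to the restriction. The point is that in both settings (a) and (b) the domain $\DD_\x(\L)$ is a \emph{closed} subspace of $\mathcal{E}_\x(\L)$ which $K_\L$ leaves invariant: for $\DD_\x(\L)=\mathcal{E}^s_\x(\L)$ closedness holds because an almost-everywhere limit of symmetric sequences is symmetric, and for $\DD_\x(\L)=D_\x(\L)$ it is Proposition~\ref{prop 1 apen}; the invariance is recorded in Remark~\ref{obs 2}. Granting this, as soon as one has $\|K_\L\vf\|_\x\le c\,\|\vf\|_\x$ for all $\vf\in\mathcal{E}_\x(\L)$, the operator $\K=K_\L|_{\DD_\x(\L)}\colon\DD_\x(\L)\to\DD_\x(\L)$ is bounded with the same constant, and we are done.

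So the heart of the matter is the norm estimate on $\mathcal{E}_\x(\L)$, which I would obtain by a termwise bound of $(K_\L\vf)(x)_n$ resting on three ingredients. First, \textbf{stability} gives the one-particle energy bound $W(x_1;(x)'_n)=\sum_{j\ge 2}\F(|x_1-x_j|)\ge -2B$, hence $e^{-W(x_1;(x)'_n)}\le e^{2\b B}$ (this is precisely the bound already used in Section~\ref{classical systems} to write $K_\L\colon\mathcal{E}_\x(\L)\to\mathcal{E}_{e^{2\b B}\x}(\L)$; see \cite{Ru}). Second, \textbf{regularity} gives $\int_{\L^m}|K(x_1;(y)_m)|\,d(y)_m=\big(\int_\L|e^{-\b\F(|x_1-y|)}-1|\,dy\big)^m\le C^m$. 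Third, membership in $\mathcal{E}_\x(\L)$ gives $|\ph((x)'_n,(y)_m)|\le\x^{\,n-1+m}\|\vf\|_\x$. Combining the three and summing the exponential series $\sum_{m\ge 0}(C\x)^m/m!=e^{C\x}$ yields $|(K_\L\vf)(x)_n|\le e^{2\b B}\x^{\,n-1}e^{C\x}\|\vf\|_\x$ for every $n\ge 1$ (the $n=1$ component is even simpler, since there $W\equiv 0$ and the inhomogeneous constant is not part of the homogeneous operator $K_\L$), and therefore $\|K_\L\vf\|_\x\le\x^{-1}e^{2\b B+C\x}\|\vf\|_\x<\infty$; this is the desired bound, slightly sharper than the mapping $\mathcal{E}_\x(\L)\to\mathcal{E}_{e^{2\b B}\x}(\L)$ in that the output stays in $\mathcal{E}_\x(\L)$.

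For completeness, in setting (b) I would also check directly that $\K$ lands in $D_\x(\L)$ and is bounded in the sequence picture of Remark~\ref{obs 2}: there $K_\L\vf$ corresponds to the shifted coefficient sequence $(a_0,a_1,a_2,\dots)$ with $a_0=-\sum_{m\ge 1}\frac{a_m}{m!}\int_{\L^m}d(y)_m\,e^{-\b U((y)_m)}$, which is a finite number because $\{a_m\}$ obeys the growth condition~(\ref{sequence cond}) (Proposition~\ref{prop 2 apend}) and $|\L|<\infty$; and~(\ref{sequence cond}) is preserved by this shift since passing from $m-1$ to $m$ particles multiplies the extremal Boltzmann factor by at most $e^{2\b B}$, i.e.\ $\mbox{ess sup}_{(x)_m}e^{-\b U((x)_m)}\le e^{2\b B}\,\mbox{ess sup}_{(x)_{m-1}}e^{-\b U((x)_{m-1})}$ --- once more the one-particle energy bound. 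The analogous point in setting (a) is that $K_\L$ preserves symmetry, see \cite{Ru}. I expect the only genuinely non-routine ingredient to be the one-particle energy bound $W\ge -2B$ (a standard but not quite trivial consequence of stability); the rest is the familiar exponential-series estimate, here bookkept so that the codomain is $\mathcal{E}_\x(\L)$ itself.
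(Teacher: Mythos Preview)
Your proposal is correct. For setting (a) it is essentially the paper's argument: exhibit the Ruelle--type termwise bound on the ambient space and restrict to the closed invariant symmetric subspace.

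For setting (b) you take a somewhat different route from the paper. The paper argues purely through the coefficient--shift description of Remark~\ref{obs 2}: it writes $K_\L\varphi_\L\{a_m\}=\varphi_\L\{a_{m-1}\}$ with the prescribed $a_0$, observes that this is again an element of the same form, and concludes ``if $\varphi_\L\in D_\xi(\L)$ then $K_\L\varphi_\L\in D_\xi(\L)$, and this also proves that $K_\L$ has finite norm by definition of $D_\xi(\L)$''. Your primary argument instead re-uses the ambient termwise estimate and only mentions the shift picture as a consistency check. Your approach has the advantage of being quantitative --- it yields the explicit bound $\|K_\L\|_\xi\le\xi^{-1}e^{2\beta B+C\xi}$ --- and more self-contained, since the paper's passage from ``well-defined into $D_\xi$'' to ``bounded operator'' is left implicit and would strictly require something like the Closed Graph Theorem. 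The paper's route, on the other hand, is more intrinsic to $D_\xi(\L)$ and does not rely on any energy inequality.

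One caveat worth flagging: the inequality $W(x_1;(x)'_n)\ge -2B$ does \emph{not} follow from stability for an arbitrary labelling of $x_1$; it requires Ruelle's choice of the particle for which $W$ is largest (the averaging argument $\sum_j W(x_j;\cdot)=2U\ge -2nB$). This is harmless here because the elements of both $\mathcal{E}^s_\xi(\L)$ and $D_\xi(\L)$ are symmetric, so one may always relabel; but strictly speaking your ambient bound is established on $\mathcal{E}^s_\xi(\L)$ rather than on the full $\mathcal{E}_\xi(\L)$. Likewise, in your ``for completeness'' paragraph the step $\mbox{ess sup}_{(x)_m}e^{-\beta U((x)_m)}\le e^{2\beta B}\,\mbox{ess sup}_{(x)_{m-1}}e^{-\beta U((x)_{m-1})}$ again hides this same relabelling and is not literally true for a fixed particle; but this is redundant anyway, since your main estimate already settles (b).
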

 
 \begin{proof}
 	If we consider the KS operator defined under the conditions in (a), that is, $\mathcal{D}_\x(\L)=\mathcal{E}_\x^s$ with positive/hard core potentials, then we can see straightforward from definitions that $\K$ has finite norm  $\parallel \cdot \parallel_\x$ in $\mathcal{E}_\x \supset \mathcal{E}_\x^s$ and that $\K \mathcal{E}_\x^s \subset \mathcal{E}_\x^s$.
 	
 	Consider now the KS operator defined under the conditions in (b), that is, $\mathcal{D}_\x(\L)= D_\x(\L)$ with regular and stable potentials. For  a vector $\vf_\L \in D_\x$ defined by
 	
 	 $$\varphi_{\Lambda}\{a_m\}(x)_n=\sum_{m=0}^{\infty} \frac{a_{n+m}}{m!} \int_{\L^m} d(y)_m  e^{-\b U((x)_n,(y)_m)},$$
 	 we have by definition of the KS operator that the vector $\K\vf$ is given by
 	 
 	 $$\K \varphi_{\Lambda}\{a_m\}(x)_{n+1}=\sum_{m=0}^{\infty} \frac{a_{n+m}}{m!} \int_{\L^m} d(y)_m  e^{-\b U((x)_{n+1},(y)_m)},$$
 	 in which for $n=0$, the constant $a_0$ is given in the Remark \ref{obs 2}. Therefore, if $\vf_\L \in D_\x(\L)$ then $\K \vf_\L \in D_\x(\L)$, and this also proves that $\K$ has finite norm  $\parallel \cdot \parallel_\x$ by definition of $D_\x(\L)$.
 \end{proof}


\section{Discussion}\label{dicussao}

          The study of the spectral properties of the Kirkwood-Salsburg operator have shown to be challenge due the  influence in the spectrum of the interaction potential as well as the operator domain. Fortunately, for representative situations like positive, or more general, stable and regular potentials, and a suitable domain operator, we have the good property to have a point spectrum equal to the set of inverse zeros of the partition function. This allow us to use different techniques to obtain precise bounds on the convergence radius, instead, for instance, to deal with the hard problem of find the exact form of the series coefficients.
          
           The spectrum of the Kirkwood-Salsburg operator formed by eigenvalues equal to inverse zeros of the partition function, give us a lot of similarities with compact operators. This fact was already discussed in Zagrebnov \cite{Z}, which the author shows, for instance, that for hard core potentials  $\K^n$ is compact for some $n$ big enough. Our Theorem \ref{teorema principal} in which we prove the Laurent expansion of the resolvent has indeed some similarities with compact operators which is known to have a similar expansion, see Steinberg \cite{S}.  From Theorem \ref{teorema principal}, we obtain the  key Lemma \ref{lema2}. The result of the lemma in  (\ref{decomp K 2}), confirms and expands for more general potentials, the results of \cite{Z}, see Remark 3.5 therein, that the KS operator is quasipotentially compact in the sense that $\K^n$ can be arbitrarily close to a compact operator  for some $n$ big enough, here the finite dimensional rank spectral projection $P$.    

The asymptotic result  (\ref{asymp 1}) in Theorem \ref{teorema principal} is new for classical continuous models and it shows  how correlation functions behave close  the singularity. Other interesting property proved is  the simplicity of the closest zero of the partition function. The prove of this fact seems to be new as far as we know. A  result related with this appears in Csikvári \cite{C} where the author proves, among other things, the simplicity of the smallest root of the Independence polynomial in which is related with the partition function of the Hard Core lattice model.    

  The main consequence of this work is, indeed, the lower bound of at least $C^{-1}$  of the convergence radius of the Cluster expansion of the density $\r_1$ for  stable and regular potentials, Theorem \ref{cor principal 2}. The equality for the convergence radius holds if there is a zero of the partition function with modulus $C^{-1}$.   In case of positive or hard core potentials, Corollary \ref{cor principal 3}, we have the convergence radius equal to $C^{-1}$.  From the  Corollary \ref{cor principal 3} we also have that the convergence radius of the Virial expansion for  positive or hard core potentials  is at least $0,5/C$, which improves  the  bounds $1/(1+e)C=0.27/C$ in Ruelle \cite{Ru} and  $1/eC=0,37/C$ in Gorzela\'{n}czyk \cite{Go}.


\vspace{8mm}

\textit{DEPARTAMENTO DE MATEMÁTICA - UFOP - OURO PRETO - MG - BRASIL}. 
 
\textit{Email address : rgalves@iceb.ufop.br}

\end{document}